\newtheorem{problem}{Problem}
\newtheorem{definition}{Definition}
\newtheorem{theorem}{Theorem}
\newtheorem{lemma}{Lemma}
\newtheorem{corollary}{Corollary}
\newtheorem{remark}{Remark}
\newenvironment{proof}{\begin{trivlist}\item[] {\em Proof: }}{\hfill $\Box$ \end{trivlist}} 
\begin{document}
\title{Transporting Robotic Swarms via Mean-Field Feedback Control}

\author{Tongjia Zheng$^{1}$, Qing Han$^{2}$ and Hai Lin$^{1}$, \IEEEmembership{Senior Member, IEEE}
\thanks{*This work was supported in part by the National Science Foundation under Grant IIS-1724070, and Grant CNS-1830335, and in part by the Army Research Laboratory under Grant W911NF-17-1-0072.}
\thanks{$^{1}$Tongia Zheng and Hai Lin are with the Department of Electrical Engineering, University of Notre Dame, Notre Dame, IN 46556, USA. {\tt\small  tzheng1@nd.edu, hlin1@nd.edu.}} 
\thanks{$^{2}$Qing Han is with the Department of Mathematics, University of Notre Dame, Notre Dame, IN 46556, USA. {\tt\small Qing.Han.7@nd.edu.}} 
}

\maketitle

\begin{abstract}
With the rapid development of AI and robotics, transporting a large swarm of networked robots has foreseeable applications in the near future. Existing research in swarm robotics has mainly followed a bottom-up philosophy with predefined local coordination and control rules. 
However, it is arduous to verify the global requirements and analyze their performance. 
This motivates us to pursue a top-down approach, and develop a provable control strategy for deploying a robotic swarm to achieve a desired global configuration. 
Specifically, we use mean-field partial differential equations (PDEs) to model the swarm and control its mean-field density (i.e., probability density) over a bounded spatial domain using mean-field feedback.
The presented control law uses density estimates as feedback signals and generates corresponding velocity fields that, by acting locally on individual robots, guide their global distribution to a target profile. 
{\color{black} The design of the velocity field is therefore centralized, but the implementation of the controller can be fully distributed -- individual robots sense the velocity field and derive their own velocity control signals accordingly.}
The key contribution lies in applying the concept of input-to-state stability (ISS) to show that the perturbed closed-loop system (a nonlinear and time-varying PDE) is locally ISS with respect to density estimation errors. 
The effectiveness of the proposed control laws is verified using agent-based simulations.
\end{abstract}

\begin{IEEEkeywords}
Input-to-state stability, PDE control systems, Swarm robotics.
\end{IEEEkeywords}

\section{Introduction}
Transporting a large robotic swarm to form certain desired global configuration is a fundamental question for a wide range of applications, such as scheduling transportation systems \cite{teodorovic2008swarm} and employing nanorobots for drug delivery.
Swarm robotic system provides superior robustness and flexibility, but also poses significant challenges in its design \cite{brambilla2013swarm}. We pursue the design of robotic swarms as a control problem, and propose a control theory based design framework. 

The major difficulty of controlling such large-scale systems results from that the control mechanism is expected to be scalable and satisfy the robots' own kinematics while their collective behaviors should be predictable and controllable. 
Existing work has revealed two different philosophies, termed as bottom-up and top-down, respectively \cite{crespi2008top}. 
Probabilistic finite state machines \cite{nouyan2008path} and artificial physics \cite{hettiarachchi2009distributed} are two classic representatives dating back to 1980s which follow the bottom-up philosophy and are known to be decentralized and scalable. However, evaluation of their stability and performance quickly becomes intractable when we increase the swarm size. 
In 2000s, graph theory was introduced into the multi-agent system community and has seen successful applications in designing coordination protocols \cite{cao2012overview}. Nevertheless, it needs to address the dimensionality issue due to large-size matrices arising in swarm robotic systems.
In recent years, top-down design has received increased research interests, which usually employs compact models to describe the macroscopic behaviors. The challenge lies in the appropriate decomposition of the global control strategy into local commands.
Markov chains approach is one representative that uses abstraction-based models for macroscopic descriptions, which partitions the workspace into a grid, over which it defines a probability distribution and designs the movement between cells to govern the evolution of the distribution \cite{accikmecse2012markov, bandyopadhyay2017probabilistic}. The major drawback is that the robots' dynamics are not considered. 
Potential games approach is another top-down approach that adopts a game theoretic formulation, which suggests to decompose the global objective function into local ones that align with the global objective \cite{marden2009cooperative}. However, finding such a decomposition is problem dependent and difficult in general.

{\color{black}
Our work is inspired by the recent top-down design that uses PDEs for macroscopic descriptions. 
There exist mainly two types of PDE models in the literature.
The first type is motivated from the fact that certain discretized PDEs match the dynamics of graph-based coordination algorithms \cite{ferrari2006analysis,meurer2011finite,qi2014multi,pilloni2015consensus,freudenthaler2020pde}.
Boundary control and backstepping design are popular techniques for such models. 
Although boundary actuation is intriguing, it has difficulty for higher-dimensional extension.
Our work adopts the other type that is known as mean-field PDEs \cite{milutinovi2006modeling,lasry2007mean,hamann2008framework,foderaro2014distributed,elamvazhuthi2015optimal,eren2017velocity,krishnan2018distributed,elamvazhuthi2018bilinear}.
These mean-field models fill the gap between individual dynamics and their global behavior with a family of ordinary/stochastic differential equations that describe the motion of individual robots, and a PDE that models the time-evolution of the mean field (i.e. their probability distribution). 
}
A typical design process starts with specifying the task using the macrostate of the PDE, and then computes local motion commands for individuals.
In \cite{elamvazhuthi2015optimal}, the authors formulate an optimization problem for a set of advection-diffusion-reaction PDEs to compute the velocity field and switching rates.
In \cite{foderaro2014distributed}, the authors present a PDE-constrained optimal control problem, and microscopic control laws are derived from the optimal macroscopic description using a potential function approach. 
These optimization-based approaches are however computationally expensive, open-loop, and may be unstable in the presence of unknown disturbance.
Mean field games incorporate the mean-field idea into large population differential games and obtain a compact model with two coupled PDEs \cite{lasry2007mean}.
The control strategy in our work is inspired by the recent idea that uses mean-field feedback to design appropriate velocity fields \cite{eren2017velocity,krishnan2018distributed,elamvazhuthi2018bilinear}. 
Such control laws can be computed efficiently and be formally proved to be convergent. 
Nevertheless, the works \cite{eren2017velocity, krishnan2018distributed} are restricted to deterministic individual dynamics, while stochasticity is ubiquitous in practice, caused possibly by sensor and actuator errors, or the inherent avoidance mechanism of the robots. 
The control law proposed in \cite{elamvazhuthi2018bilinear} applies to the stochastic case, but their focus is on the controllability property. 
{\color{black}
In practice, mean-field feedback control relies on estimating the unknown density, which causes robustness issues in terms of estimation errors. 
Our work distinguishes from \cite{eren2017velocity,krishnan2018distributed,elamvazhuthi2018bilinear} in that we will consider the stochastic case, present general results for its solution property, and study the robustness issue of the proposed control law. 
}

In particular, we study the problem of mean-field feedback control of swarm robotic systems modelled by PDEs. 
We design velocity fields (from which individual control commands can be derived) by using the real-time density as feedback signals, such that the density of the closed-loop system evolves towards a desired one. 
Our contribution includes three aspects. 
{\color{black}
First, we present general results for the solution property (well-posedness, regularity and positivity) of the PDE system.
Second, we propose mean-field feedback laws for robotic swarms that involve stochastic motions and apply the notion of input-to-state stability (ISS) to prove that the closed-loop system is locally ISS with respect to density estimation errors. 
Third, in terms of theoretical contribution to PDE control systems, our results apply the concept of ISS to nonlinear and time-varying PDEs with unbounded operators. Most existing work that studies ISS for PDE systems is restricted to the linear case or one-dimensional case (e.g. boundary control). 
However, the control input in our problem is a vector field that couples with the system state (which thus makes the perturbed closed-loop system nonlinear), and acts on the system through unbounded operators.
}

The rest of the paper is organized as follow. Section \ref{section:preliminaries} introduces some preliminaries and useful lemmas. Problem formulation is given in Section \ref{section:problem formulation}. Section \ref{section:main results} is our main results, in which we present general results for the solution property of the PDE system, present a mean-field feedback control law and then study its robustness issue with respect to density estimation errors. Section \ref{section:simulation} performs an agent-based simulation to verify the effectiveness of the control law. Section \ref{section:conclusion} summarizes the contribution and points out future research.

\section{Preliminaries} \label{section:preliminaries}
\subsection{Notations and useful lemmas}\label{section:notation}
{\color{black}
Let $E\subset\mathbb{R}^n$ be a measurable set and $k\in\mathbb{N}$. Consider $f:E\to\mathbb{R}$. Denote $C^k(E)=\{f|f^{(k)}\text{ is continuous}\}$ and $C(E)=C^0(E)$.
For $p\in[1,\infty)$, denote $L^{p}(E)=\{f|\|f\|_{L^p(\omega)}:=(\int_{E}|f(x)|^{p}dx)^{1/p}<\infty\}$, endowed with the norm $\|\cdot\|_{L^p(E)}$. Denote $L^{\infty}(E)=\{f|\|f\|_{L^\infty(E)}:=\operatorname{ess}\sup_{x\in E}|f(x)|<\infty\}$, endowed with the norm $\|f\|_{L^\infty(E)}$. 
We use $D^\alpha f$ to represent the weak derivatives of $f$ for all multi-indices $\alpha$ of length $|\alpha|$.
For $p\in[1,\infty)$, denote $W^{k, p}(E)=\{f|\|f\|_{W^{k, p}(E)}:=\sum_{|\alpha|\leq k} \|D^{\alpha}f\|_{L^p(E)}<\infty\}$, endowed with the norm $\|\cdot\|_{W^{k, p}(E)}$.
Analogously, $W^{k,\infty}(E)$ is defined, equipped with the norm $\|\cdot\|_{W^{k,\infty}(E)}$. 
We also denote $H^k=W^{k,2}$.
The spaces $W^{k, p}(E)$ are referred to as \textit{Sobolev spaces}.

Let $\omega$ be a bounded and connected $C^1$-domain in $\mathbb R^n$ and $T>0$ be a constant. Denote by $\partial\omega$ the boundary of $\omega$.
Set $\Omega=\omega\times(0,T)$. 
For a function $u(x,t):\Omega\to\mathbb{R}$, we call $x$ the spatial variable and $t$ the time variable. We denote $\partial_tu=\partial u/\partial t$ and $\partial_iu=\partial u/\partial x_i$, where $x_i$ is the $i$-th coordinate of $x$. 
The gradient and Laplacian of a scalar function $f$ are denoted by $\nabla f$ and $\Delta f$, respectively, and the divergence of a vector field $\boldsymbol{F}$ is denoted by $\nabla\cdot\boldsymbol{F}$. 
The differentiation operation of these operators are only taken with respect to the spatial variable $x$ if $f$ and $\boldsymbol{F}$ are also functions of $t$.

{\color{black}
We define the following space of time and space dependent functions, which will be used to study the solution of PDEs:
\begin{align*}
\begin{split}
    \mathcal{M}:=\{&u\in L^{2}(\Omega)| \partial_{i}u\in L^{2}(\Omega), i=1,\dots,n\\ &\text{and } u(\cdot,t)\in L^{2}(\omega),\forall
t\in[0,T]\}.
\end{split}
\end{align*}}
The norm on $\mathcal{M}$ is defined by 
$$
\|u\|_\mathcal{M}=\sum_{i=1}^n\|\partial_iu\|_{L^2(\Omega)}+\sup_{t\in [0,T]}\|u(\cdot, t)\|_{\color{black}{L^2(\omega)}}.
$$ 

Throughout this paper, when we say $f(x)$ is a \textit{density function} on $\omega$, we mean $f(x)$ is a probability density function, i.e. $f(x)\geq0,\forall x\in\omega$ and $\int_\omega f(x)dx=1$.
}


\begin{lemma} \label{lmm:Poincare inequality}
(Poincar\'e inequality \cite{lieberman1996second}). For $p \in[1, \infty)$ and $\omega$, a bounded connected open set of $\mathbb{R}^{n}$ with a Lipschitz boundary, there exists a constant $C$ depending only on $\omega$ and $p$ such that for every function $f\in W^{1, p}(\omega)$,
$$
\left\|f-f_{\omega}\right\|_{p} \leq C\|\nabla f\|_{p},
$$
where $f_{\omega}=\frac{1}{|\omega|} \int_{\omega} f d x$, and $|\omega|$ is the Lebesgue measure of $\omega$.
\end{lemma}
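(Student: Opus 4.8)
The plan is to argue by contradiction and to exploit the compactness of the Sobolev embedding, which is where the boundedness, connectedness, and Lipschitz regularity of $\omega$ all enter. Suppose the claimed inequality is false. Then for every $k\in\mathbb{N}$ there is some $f_k\in W^{1,p}(\omega)$ violating it, and after subtracting the mean and normalizing I would set
$$
g_k=\frac{f_k-(f_k)_\omega}{\|f_k-(f_k)_\omega\|_{L^p(\omega)}}.
$$
By construction each $g_k$ satisfies $\|g_k\|_{L^p(\omega)}=1$, has vanishing average $(g_k)_\omega=0$, and obeys $\|\nabla g_k\|_{L^p(\omega)}<1/k$. In particular the sequence $\{g_k\}$ is bounded in $W^{1,p}(\omega)$.

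Next I would invoke the Rellich--Kondrachov compactness theorem: because $\omega$ is bounded with Lipschitz boundary, the embedding $W^{1,p}(\omega)\hookrightarrow L^p(\omega)$ is compact. Hence a subsequence $g_{k_j}$ converges strongly in $L^p(\omega)$ to a limit $g$. Since $\|\nabla g_{k_j}\|_{L^p(\omega)}\to 0$ while $g_{k_j}\to g$ in $L^p(\omega)$, passing to the limit in the defining identity for the weak gradient shows that $g\in W^{1,p}(\omega)$ with $\nabla g=0$ almost everywhere.

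The conclusion then follows from two facts. First, a $W^{1,p}$ function with vanishing weak gradient on a connected domain is almost everywhere equal to a constant, so $g\equiv c$; this is precisely where connectedness of $\omega$ is used. Second, the zero-average property is preserved under $L^p$ convergence, giving $(g)_\omega=0$ and therefore $c=0$, i.e. $g\equiv 0$. But strong $L^p$ convergence also forces $\|g\|_{L^p(\omega)}=\lim_j\|g_{k_j}\|_{L^p(\omega)}=1$, contradicting $g\equiv 0$. This contradiction establishes the inequality, and the constant $C$ depends only on $\omega$ and $p$ since the entire argument is intrinsic to the domain and the exponent.

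I expect the main obstacle to be justifying the compactness step cleanly, as the Rellich--Kondrachov theorem is exactly what requires the geometric hypotheses on $\omega$: boundedness together with the Lipschitz boundary (the latter furnishing an extension operator). Without compactness, the bounded sequence $\{g_k\}$ would converge only weakly in $L^p(\omega)$, which is too weak to preserve the normalization $\|g\|_{L^p(\omega)}=1$ in the limit and the argument would collapse. Everything else -- the normalization, the preservation of the zero mean, and the ``vanishing gradient implies constant'' step -- is routine once compactness is available.
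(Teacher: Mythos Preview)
Your argument is the standard compactness proof and is correct as written; the normalization, Rellich--Kondrachov step, and the ``zero gradient on a connected domain implies constant'' conclusion are all sound. Note, however, that the paper does not supply its own proof of this lemma: it is stated as a cited preliminary result from \cite{lieberman1996second} and used as a black box in the later stability estimates, so there is no paper proof to compare against.
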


\subsection{Input-to-state stability}
{\color{black}
Input-to-state stability is a stability notion widely used to study stability of nonlinear control systems with external inputs \cite{sontag1995characterizations}. 
We introduce its extension to infinite-dimensional systems presented in \cite{dashkovskiy2013input}.
Let $\left(X,\|\cdot\|_{X}\right)$ and $\left(U,\|\cdot\|_{U}\right)$ be the state space and the space of input values, endowed with norms $\|\cdot\|_{X}$ and $\|\cdot\|_{U}$, respectively.
Denote by $PC(I; Y)$ the space of piecewise right-continuous functions from $I\subset\mathbb{R}$ to $Y$, equipped with the standard sup-norm.
Define the following classes of comparison functions:
\begin{align*}
    \mathcal{K} &:= \{\gamma :\mathbb{R}_{+} \to \mathbb{R}_{+}|\gamma \text{ is continuous and strictly} \\
    &\quad\quad \text{increasing with } \gamma (0)=0\}\\
    \mathcal{K}_{\infty} &:=\{\gamma \in \mathcal{K}|\gamma \text{ is unbounded}\}  \\
    \mathcal{L} &:= \{\gamma  : \mathbb{R}_{+} \to \mathbb{R}_{+} | \gamma \text{ is continuous and strictly } \\
    &\quad\quad \text{decreasing with } \lim_{t\to \infty} \gamma (t) = 0\}  \\
    \mathcal{KL} &:=\{\beta:\mathbb{R}_{+} \times\mathbb{R}_{+} \to\mathbb{R}_{+} |\beta(\cdot, t) \in \mathcal{K}, \forall t \geq 0, \\
    &\qquad\beta(r, \cdot) \in \mathcal{L}, \forall r>0\}.
\end{align*}

We use the following axiomatic definition of a control system \cite{dashkovskiy2013input}.
\begin{definition}\label{dfn:control system}
The triple $\Sigma=\left(X, U_{c}, \phi\right)$, consisting of the state space $X$, the space of admissible input functions $U_{c} \subset\left\{f: \mathbb{R}_{+} \rightarrow U\right\}$, both of which are linear normed spaces, equipped with norms $\|\cdot\|_{X}$ and $\|\cdot\|_{U_{c}}$, respectively, and of a transition map $\phi: A_{\phi} \rightarrow X, A_{\phi} \subset \mathbb{R}_{+} \times \mathbb{R}_{+} \times X \times U_{c}$, is called a control system, if the following
properties hold:
\begin{itemize}
    \item Existence: for every $\left(t_{0}, \phi_{0}, u\right) \in \mathbb{R}_{+} \times X \times U_{c}$ there exists $t>t_{0}:\left[t_{0}, t\right] \times\left\{\left(t_{0}, \phi_{0}, u\right)\right\} \subset A_{\phi}$.
    \item Identity property: for every $\left(t_{0}, \phi_{0}\right) \in \mathbb{R}_{+} \times X$ it holds $\phi\left(t_{0}, t_{0}, \phi_{0}, \cdot\right)=\phi_{0}$.
    \item Causality: for every $\left(t, t_{0}, \phi_{0}, u\right) \in A_{\phi}$, for every $\tilde{u} \in U_{c}$, such that $u(s)=$ $\tilde{u}(s), s \in\left[t_{0}, t\right]$ it holds $\left(t, t_{0}, \phi_{0}, \tilde{u}\right) \in A_{\phi}$ and $\phi\left(t, t_{0}, \phi_{0}, u\right) \equiv \phi\left(t, t_{0}, \phi_{0}, \tilde{u}\right)$.
    \item Continuity: for each $\left(t_{0}, \phi_{0}, u\right) \in \mathbb{R}_{+} \times X \times U_{c}$ the map $t \mapsto \phi\left(t, t_{0}, \phi_{0}, u\right)$ is continuous.
    \item Semigroup property: for all $t \geq s \geq 0$, for all $\phi_{0} \in X, u \in U_{c}$ so that $\left(t, s, \phi_{0}, u\right) \in$ $A_{\phi}$, it follows
    \begin{itemize}
        \item $\left(r, s, \phi_{0}, u\right) \in A_{\phi}, r \in[s, t]$.
        \item for all $r \in[s, t]$ it holds $\phi(t, r, \phi(r, s, x, u), u)=\phi(t, s, x, u)$.
    \end{itemize}
\end{itemize}
\end{definition} 

Here, $\phi(t, s, x, u)$ denotes the system state at time $t \in \mathbb{R}_{+}$, if its state at time $s \in \mathbb{R}_{+}$ was $x \in X$ and the input $u \in U_{c}$ was applied.

\begin{definition} \label{dfn:(L)ISS}
$\Sigma$ is called \textit{locally input-to-state stable (LISS)}, if $\exists \rho_{x}, \rho_{u}>0$ and $\exists \beta \in \mathcal{K} \mathcal{L}$ and $\gamma \in \mathcal{K}$, such that the inequality
\begin{equation}\label{eq:(L)ISS}
    \left\|\phi\left(t,t_0,\phi_{0}, u\right)\right\|_{X} \leq \beta\left(\left\|\phi_{0}\right\|_{X}, t-t_0\right)+\gamma\left(\|u\|_{U_{c}}\right)
\end{equation}
holds $\forall \phi_{0}:\left\|\phi_{0}\right\|_{X} \leq \rho_{x},\forall u \in U_{c}:\|u\|_{U_{c}} \leq \rho_{u}$ and $\forall t \geq t_0$.
\end{definition} 

The control system is called \textit{input-to-state stable (ISS)}, if in the above definition $\rho_{x}$ and $\rho_{u}$ can be chosen equal to $\infty$.
If $U_{c}=PC\left(\mathbb{R}_{+}; U\right)$, then $\|u\|_{U_{c}}=\sup_{0\leq s\leq \infty}\|u(s)\|_{U}$, and due to the causality property of $\Sigma$, one can obtain an equivalent definition of (L)ISS by replacing \eqref{eq:(L)ISS} with the following inequality \cite{dashkovskiy2013input}:
\begin{equation*}
    \left\|\phi\left(t,t_0,\phi_{0}, u\right)\right\|_{X} \leq \beta\left(\left\|\phi_{0}\right\|_{X}, t-t_0\right)+\gamma\Big(\sup _{0 \leq s \leq t}\|u(s)\|_{U}\Big).
\end{equation*}
To verify the ISS property, Lyapunov functions can be exploited. 

\begin{definition}\label{dfn:(L)ISS-Lyapunov function}
A continuous function $V:\mathbb{R}_{+}\times D \rightarrow \mathbb{R}_{+}, D \subset X, 0 \in\operatorname{int}(D)=D \backslash \partial D$ is called an \textit{LISS-Lyapunov function} for $\Sigma$, if there exist constants $\rho_{x}, \rho_{u}>0$, functions $\psi_{1}, \psi_{2} \in \mathcal{K}_{\infty}, \chi \in \mathcal{K}$, and a continuous positive definite function $W$, such that:
\begin{itemize}
    \item[(i)] $\psi_{1}\left(\|x\|_{X}\right) \leq V(t, x) \leq \psi_{2}\left(\|x\|_{X}\right), \quad \forall t \in \mathbb{R}_{+}, \forall x \in D$
    \item[(ii)] $\forall x \in X:\|x\|_{X} \leq \rho_{x}, \forall u \in U_{c}:\|u\|_{U_{c}} \leq \rho_{u}$ it holds:
    \begin{equation}\label{eq:LISS implication form 1}
        \|x\|_{X} \geq \chi(\|u\|_{U_{c}}) \Rightarrow \dot{V}_{u}(t,x) \leq-W(\|x\|_{X}), \quad \forall t\in\mathbb{R}_{+}
    \end{equation}
\end{itemize}
where the derivative of $V$ corresponding to the input $u$ is given by
\[
\dot{V}_{u}(t,x)=\varlimsup_{\Delta t \rightarrow+0} \frac{1}{\Delta t}(V(t+\Delta t,\phi(t+\Delta t,t,x,u))-V(t,x)).
\]
\end{definition} 

If in the previous definition $D=X, \rho_{x}=\infty$ and $\rho_{u}=\infty,$ then the function $V$ is called \textit{ISS-Lyapunov function}.
If $U_{c}=PC\left(\mathbb{R}_{+};U\right)$, then condition (ii) in Definition \ref{dfn:(L)ISS-Lyapunov function} can be replaced by the following condition (ii)$'$ due to the causality property \cite{dashkovskiy2013input}:
\begin{itemize}
    \item[(ii)$'$] $\forall x \in X:\|x\|_{X} \leq \rho_{x}, \forall \xi \in U:\|\xi\|_{U} \leq \rho_{u}$ it holds:
    \[
    \|x\|_{X} \geq \chi(\|\xi\|_{U}) \Rightarrow \dot{V}_{u}(t,x) \leq-W(\|x\|_{X}), \quad \forall t\in\mathbb{R}_{+}
    \]
    for all $u \in U_{c}:\|u\|_{U_{c}} \leq \rho_{u}$ with $u(0)=\xi$.
\end{itemize}
}

\section{Problem formulation} \label{section:problem formulation}
This work studies the transport problem of robotic swarms. 
Specifically, we want to design velocity commands for individual robots such that the swarm evolves to certain global distribution. 
The robots are assumed homogeneous whose motions satisfy:
\begin{equation} \label{eq:Langevin equation}
    dX_i={v} (X_i,t)dt+\sqrt{2\sigma(X_i,t)}dB_t, \quad i = 1,\dots,N,
\end{equation}
where $N$ is robots' population, $X_i\in\omega$ is the position of the $i$-th robot, ${v}(X_i,t)=(v_1,\dots,v_n)\in\mathbb{R}^n$ is the velocity field that acts on the robots, $B_t\in\mathbb{R}^n$ is an $n$-dimensional Wiener process which represents stochastic motions, and $\sqrt{2\sigma(X_i,t)}\in\mathbb{R}$ is the standard deviation of the stochastic motion at position $X_i$.

Their macroscopic state can be described by the following mean-field PDE, also known as the Fokker-Planck equation, which models the evolution of the swarm's mean-field density $p(x,t)$ on $\omega$:
\begin{align} \label{eq:FP equation}
\begin{split}
     \partial_t p =-\nabla\cdot({v} p) + \Delta(\sigma p) &\quad\text{in}\quad \Omega, \\
    p=p_0 &\quad\text{on}\quad {\color{black}\omega\times\{0\}},\\
    \boldsymbol{n} \cdot(\nabla(\sigma p)-{v}p)=0 &\quad\text{on}\quad S(\Omega),
\end{split}
\end{align}
where $\boldsymbol{n}$ is the unit inner normal to the boundary $\partial\omega$, and $p_0(x)$ is the initial density. The last equation is the \textit{reflecting boundary condition} to confine the swarm within the domain $\omega$. 

\begin{remark}
We point out that \eqref{eq:FP equation} holds regardless of the number of robots. 
However, if $N$ is small, using the swarm's (probability) density to represent its global state doesn't make much sense. 
Hence, we usually assume $N$ is large. Note that \eqref{eq:Langevin equation} and \eqref{eq:FP equation} share the same set of coefficients, which means that the macroscopic velocity field we design for the PDE system can be easily transmitted to individual robots. 
{\color{black}
Note that individual robots need to derive their own low-level controller to track the reference velocity command, which can however be done in a distributed way.
The velocity tracking problem has been widely studied in literature especially for mobile robots, and hence is not studied in this paper.
}
\end{remark}

\begin{problem}
Given a desired density $p^*(x)$, we want to design the velocity field ${v}(x,t)$ such that the solution of \eqref{eq:FP equation} converges to $p^*(x)$.
\end{problem}

\section{Main results} \label{section:main results}

\subsection{Well-posedness and regularity}
{\color{black}
Before presenting the velocity laws, we shall study the solution property of \eqref{eq:FP equation}, including its well-posedness, regularity, mass conservation and positivity preservation.
}
We point out that \eqref{eq:FP equation} is a special case of the so-called \textit{conormal derivative problem} for parabolic equations of divergent form \cite{lieberman1996second}. 
Relevant results from Chapter VI in \cite{lieberman1996second} are summarized in Appendices.
Considering the initial/boundary value problem \eqref{eq:FP equation}, we have the following result for its weak solutions (see Definition \ref{dfn:weak solution} in Appendices).

\begin{theorem}\label{thm:well-posedness}
Assume 
\begin{equation}\label{eq:regularity condition1}
    v_i\in L^\infty(\Omega), \sigma\in L^\infty(\Omega), \partial_i\sigma\in L^\infty(\Omega)\text{ and } p_0\in L^\infty(\omega).
\end{equation} 
Then we have the following properties:
\begin{itemize}
    \item (\textbf{Well-posedness and regularity}) There exists a unique weak solution $p\in \mathcal{M}$ of the problem \eqref{eq:FP equation}.
    \item (\textbf{Mass conservation}) The solution satisfies $p(\cdot,t)\in{\color{black} H^1(\omega)}$ and $\int_{{\color{black}\omega}}p(\cdot,t)dx=1$ for almost every $t\in(0,T]$.
    \item (\textbf{Positivity preservation}) If we further assume that
    \begin{equation}\label{eq:regularity condition2}
        \partial_iv_i\in L^\infty(\Omega)\text{ and } \partial_i^2\sigma\in L^\infty(\Omega),
    \end{equation} 
    then $p_0\geq(\text{or}>)0$ implies $p\geq(\text{or}>)0$ for almost every $t\in[0,T]$.
\end{itemize}
\end{theorem}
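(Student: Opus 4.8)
The plan is to recognize \eqref{eq:FP equation} as a conormal derivative problem for a uniformly parabolic operator in divergence form, invoke the well-posedness theory quoted in the Appendices, and then obtain mass conservation and positivity by direct arguments. First I would rewrite the principal part: since $\Delta(\sigma p)=\nabla\cdot(\sigma\nabla p+p\nabla\sigma)$, the equation becomes
\[
\partial_t p=\nabla\cdot\big(\sigma\nabla p+(\nabla\sigma-v)p\big),
\]
so that $\sigma$ is the scalar principal coefficient and $\nabla\sigma-v$ the drift, while the stated flux condition $\boldsymbol n\cdot(\nabla(\sigma p)-vp)=0$ is exactly the vanishing of the conormal flux $\boldsymbol n\cdot\big(\sigma\nabla p+(\nabla\sigma-v)p\big)$. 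Under \eqref{eq:regularity condition1}, together with uniform parabolicity $\sigma\ge\lambda>0$ (which is natural since $\sigma$ is a diffusion coefficient), all coefficients lie in $L^\infty(\Omega)$, so the hypotheses of the existence theorem in the Appendices are verified and yield a unique weak solution $p\in\mathcal M$. Membership in $\mathcal M$ already encodes $p(\cdot,t)\in H^1(\omega)$ for a.e.\ $t$, giving the regularity claim.

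For mass conservation I would integrate the equation over $\omega$ and apply the divergence theorem. Because $\boldsymbol n$ is the \emph{inner} normal, the outer normal is $-\boldsymbol n$, whence
\[
\frac{d}{dt}\int_\omega p\,dx=\int_\omega\nabla\cdot\big(\nabla(\sigma p)-vp\big)\,dx=-\int_{\partial\omega}\boldsymbol n\cdot\big(\nabla(\sigma p)-vp\big)\,dS=0
\]
by the reflecting boundary condition. Hence $\int_\omega p(\cdot,t)\,dx\equiv\int_\omega p_0\,dx=1$ for a.e.\ $t$.

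Positivity is the most delicate part. For the weak statement I would use an energy argument, testing the weak form against the negative part $p^-=\min(p,0)$. The conormal boundary term vanishes, and using $\nabla p^-=\nabla p\,\mathbf{1}_{\{p<0\}}$ one obtains
\[
\tfrac12\frac{d}{dt}\|p^-\|_{L^2(\omega)}^2=-\int_\omega\sigma|\nabla p^-|^2\,dx-\int_\omega(\nabla\sigma-v)\cdot\nabla p^-\,p^-\,dx.
\]
Absorbing the drift term by Young's inequality against $\lambda|\nabla p^-|^2$ leaves $\frac{d}{dt}\|p^-\|_{L^2(\omega)}^2\le C\|p^-\|_{L^2(\omega)}^2$; since $p_0\ge0$ gives $p^-(0)=0$, Gr\"onwall forces $p^-\equiv0$, i.e.\ $p\ge0$. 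For strict positivity I would instead use the non-divergence form
\[
\partial_t p=\sigma\Delta p+(2\nabla\sigma-v)\cdot\nabla p+(\Delta\sigma-\nabla\cdot v)p,
\]
whose zeroth-order coefficient $\Delta\sigma-\nabla\cdot v$ is bounded precisely under the extra hypotheses \eqref{eq:regularity condition2}, and then invoke the strong maximum principle (equivalently the parabolic Harnack inequality) for uniformly parabolic equations quoted in the Appendices.

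The main obstacle I anticipate is not a single estimate but the bookkeeping needed to certify that the cited theorems genuinely apply. In particular I must make the uniform-parabolicity lower bound on $\sigma$ explicit, check that both the divergence and the non-divergence forms have coefficients in the regularity classes demanded by the respective existence and maximum-principle results, and justify that the formal manipulations are legitimate for a weak solution of the regularity guaranteed by $\mathcal M$ --- e.g.\ that $p^-$ is an admissible test function, that the time derivative of $\|p^-\|_{L^2(\omega)}^2$ obeys the chain rule, and that the boundary integrals are well defined via traces on $\partial\omega$.
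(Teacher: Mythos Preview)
Your approach is correct and structurally matches the paper's: rewrite \eqref{eq:FP equation} as a conormal problem with $a_{ij}=\sigma\delta_{ij}$ and drift $b_i=\partial_i\sigma-v_i$, then invoke the Appendix theorems. Two points of comparison are worth noting.

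For mass conservation, the paper avoids the regularity bookkeeping you anticipate by simply taking the test function $\eta\equiv1$ in the weak formulation \eqref{eq:weak solution}; all interior and boundary flux terms vanish automatically and one is left with $\int_\omega p(\cdot,t)\,dx=\int_\omega p_0\,dx=1$ for a.e.\ $t$. This is cleaner than your formal divergence-theorem computation, which would need to be justified for a weak solution anyway.

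For positivity, the paper does not carry out an explicit $p^-$ energy argument but directly applies Corollary~\ref{corollary:positivity}, observing that \eqref{eq:regularity condition2} gives $\partial_ib_i=\partial_i^2\sigma-\partial_iv_i\in L^\infty(\Omega)$, which is precisely the hypothesis needed there (and the corollary's $\mu$-shift trick handles both $\geq$ and $>$ simultaneously via the strong maximum principle). Your route---$p^-$ as test function for nonnegativity, then the non-divergence form plus Harnack/strong maximum principle for strict positivity---is a legitimate alternative; in fact your $p^-$ argument yields $p\geq0$ \emph{without} assuming \eqref{eq:regularity condition2}, a mild sharpening over the stated theorem. The trade-off is exactly the bookkeeping you flag: admissibility of $p^-$ as a test function and the chain rule for $\tfrac{d}{dt}\|p^-\|_{L^2}^2$ at the available regularity.
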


\begin{proof}
We rewrite \eqref{eq:FP equation} as
\begin{align*}
\begin{split}
    Lp=-\partial_tp+\nabla\cdot[\sigma\nabla p+(\nabla\sigma-v)p]=0 &\quad\text{in}\quad \Omega,\\
    p=p_0 &\quad\text{on}\quad {\color{black}\omega\times\{0\}},\\
    Mp=\boldsymbol{n}\cdot(\sigma\nabla p+(\nabla\sigma-v)p)=0 &\quad\text{on}\quad S(\Omega).
\end{split}
\end{align*}
Comparing it with the standard conormal derivative problem \eqref{eq:conormal problem} in Appendices, we note that $a_{ij}=\sigma\delta_{ij}\in L^\infty(\Omega)$, $b_i=\partial_i\sigma-v_i\in L^\infty(\Omega)$, $\varphi=p_0\in L^\infty(\omega)\subset L^2(\omega)$, and all other coefficients in \eqref{eq:conormal problem} are 0. According to Theorem \ref{thm:well-posedness of conormal}, there exists a unique weak solution $p\in \mathcal{M}$ of the problem \eqref{eq:FP equation}.
For such a weak solution, by taking the test function $\eta=1$ in \eqref{eq:weak solution}, we have, for almost every $t\in[0,T]$, ${\color{black}\int_{{\color{black}\omega}}p(x,t)dx=\int_{\omega}p_0(x)dx=1,}$
which means the solution always represents a density function. Also, $p\in\mathcal{M}$ implies that {\color{black}$p(\cdot,t)\in H^1(\omega)$} for almost every $t\in(0,T]$. 
Furthermore, condition \eqref{eq:regularity condition2} implies $\partial_ib_i\in L^\infty(\Omega)$. By Corollary \ref{corollary:positivity}, $p_0\geq(\text{or}>)0$ implies $p\geq(\text{or}>)0$ for almost every $t\in[0,T]$.
\end{proof}

\begin{remark}
The regularity conditions for $\sigma$ and $p_0$ in \eqref{eq:regularity condition1} and \eqref{eq:regularity condition2} can be easily satisfied, while the regularity condition for $v$ depends on the velocity field we design. 
We shall further study this problem in subsequent sections. 
We point out that such a weak solution has $L^2$ spatial derivatives.
{\color{black}
It can be derived from Definition \ref{dfn:weak solution} and \eqref{eq:weak solution} that such a weak solution satisfies $\partial_tp\in L^2(\Omega)$, which implies that $p(t,\cdot)$ is absolutely continuous from $(0,T)$ to $L^2(\omega)$.
This time regularity will enable us to use Lyapunov functions to study its stability.
}
\end{remark}

\subsection{Exponentially stable mean-field feedback control}
First, we present a mean-field feedback law with exponential convergence assuming $p(x,t)$ is available. 
Given a desired density $p^*(x)>0$, define $\Phi(x,t) = p(x,t) - p^*(x)$. 
{\color{black}Denote $\Phi_0=p_0-p^*$.}

{\color{black}
Our main idea is to design ${v}(x,t)$ such that $\Phi(x,t)$ satisfies the following diffusion equation:
\begin{equation} \label{eq:diffusion equation}
    \partial_t\Phi(x,t) = \nabla\cdot[\alpha(x,t)\nabla\Phi(x,t)],
\end{equation}
where $\alpha(x,t)>0$ is the diffusion coefficient. It is known that under mild conditions on $\alpha(x,t)$, the solution of \eqref{eq:diffusion equation} evolves towards a constant function in $\omega$, which will be 0 because $\Phi$ is the difference of two density functions and, for any $t$,
{\color{black}
\begin{equation}
    \int_{\omega}\Phi(x,t)dx=\int_{\omega} p(x,t)dx-\int_{\omega} p^*(x)dx\equiv1-1=0.
\end{equation}
}
The idea of using diffusion/heat equations for designing velocity fields is originally from \cite{eren2017velocity}. 
{\color{black}
Our work extends the original work in three aspects.
First, we generalize the design to PDEs that contains stochastic motions and rigorously study its solution property to justify the Lyapunov-based stability analysis. 
Second, the control law given in \cite{eren2017velocity} can be problematic if the density becomes zero.
We will show how to avoid this issue by appropriately constructing the density estimate in the mean-field feedback law.
Third, we continue to study the robustness of this modified feedback law with respect to density estimation errors (which includes not only the inherent error of any estimation algorithm, but also the ``artificial error'' introduced to ensure that the feedback law remains bounded).

Our first result is to enhance the stability result in \cite{eren2017velocity} assuming that the density is strictly positive and can be perfectly measured.
}
\begin{theorem} \label{thm:exponential stability}
(\textbf{Exponential stability}). Design the velocity field as
\begin{equation} \label{eq:density feedback law}
    {v}(x,t) = -\frac{\alpha(x,t)\nabla\big[p(x,t) - p^*(x)\big] -\nabla\big[\sigma(x,t)p(x,t)\big]}{p(x,t)},
\end{equation}
where $\alpha(x,t)>0$ is a parameter that satisfies $\sup_{(x,t)\in\Omega}\alpha(x,t)<\infty$ and $\inf_{(x,t)\in\Omega}\alpha(x,t)>0$. If the solution satisfies $p(x,t)>0$ for all $t>0$, then $\|\Phi\|_{L^2(\omega)}\to0$ exponentially.
\end{theorem}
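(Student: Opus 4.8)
The plan is to show that the designed velocity field \eqref{eq:density feedback law} forces the error $\Phi$ to obey the diffusion equation \eqref{eq:diffusion equation} with a homogeneous Neumann boundary condition, and then to run a standard $L^2$-energy Lyapunov argument closed by the Poincar\'e inequality of Lemma \ref{lmm:Poincare inequality}. First I would substitute \eqref{eq:density feedback law} into the Fokker-Planck equation \eqref{eq:FP equation}. Writing the flux as $\nabla(\sigma p)-{v}p$ and noting that, since $p>0$, multiplying the feedback law by $p$ gives ${v}p=-\alpha\nabla(p-p^*)+\nabla(\sigma p)$, a direct cancellation yields $\nabla(\sigma p)-{v}p=\alpha\nabla\Phi$. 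Hence $\partial_t p=\nabla\cdot(\alpha\nabla\Phi)$, and since $p^*$ is time-independent we have $\partial_t\Phi=\partial_t p$, so $\Phi$ satisfies \eqref{eq:diffusion equation}. The same identity turns the reflecting boundary condition $\boldsymbol{n}\cdot(\nabla(\sigma p)-{v}p)=0$ into $\alpha\,\boldsymbol{n}\cdot\nabla\Phi=0$, i.e.\ $\boldsymbol{n}\cdot\nabla\Phi=0$ on $S(\Omega)$ because $\alpha>0$.

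Next I would take the Lyapunov candidate $V(t)=\tfrac12\|\Phi(\cdot,t)\|_{L^2(\omega)}^2$. Differentiating in time and using $\partial_t\Phi=\nabla\cdot(\alpha\nabla\Phi)$ gives $\dot V=\int_\omega\Phi\,\nabla\cdot(\alpha\nabla\Phi)\,dx$. Integrating by parts and invoking the Neumann condition to discard the boundary term reduces this to $\dot V=-\int_\omega\alpha|\nabla\Phi|^2\,dx\le-\alpha_{\min}\|\nabla\Phi\|_{L^2(\omega)}^2$, where $\alpha_{\min}:=\inf_{(x,t)\in\Omega}\alpha(x,t)>0$.

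To close the estimate I would exploit that $\Phi$ has zero spatial mean: by the mass conservation part of Theorem \ref{thm:well-posedness}, $\int_\omega\Phi\,dx=\int_\omega p\,dx-\int_\omega p^*\,dx=0$, so $\Phi_\omega=0$ and Lemma \ref{lmm:Poincare inequality} gives $\|\Phi\|_{L^2(\omega)}\le C\|\nabla\Phi\|_{L^2(\omega)}$. Consequently $\dot V\le-(2\alpha_{\min}/C^2)V$, and Gr\"onwall's inequality yields $V(t)\le V(0)e^{-(2\alpha_{\min}/C^2)t}$, that is, $\|\Phi(\cdot,t)\|_{L^2(\omega)}\le\|\Phi_0\|_{L^2(\omega)}e^{-(\alpha_{\min}/C^2)t}$, which is the claimed exponential convergence.

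The substitution and the integration by parts are routine; the step that requires genuine care is the \emph{justification} of the energy identity $\dot V=\int_\omega\Phi\,\partial_t\Phi\,dx$, since $\Phi$ is only a weak solution. Here I would lean on the regularity established in Theorem \ref{thm:well-posedness} and the subsequent remark: the weak solution satisfies $p(\cdot,t)\in H^1(\omega)$ and $\partial_t p\in L^2(\Omega)$, so $t\mapsto p(\cdot,t)$ is absolutely continuous into $L^2(\omega)$, which makes $V$ absolutely continuous and legitimizes both the differentiation of $V$ and the spatial integration by parts (with the Neumann condition entering through the weak formulation). A secondary subtlety is that the whole argument presupposes that the closed-loop velocity field \eqref{eq:density feedback law} is regular enough for Theorem \ref{thm:well-posedness} to apply; this is precisely what the hypothesis $p>0$ is there to guarantee, since it keeps the denominator in \eqref{eq:density feedback law} bounded away from zero.
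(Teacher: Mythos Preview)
Your proof is correct and follows essentially the same route as the paper: substitute the feedback law to reduce the closed loop to a diffusion equation with homogeneous Neumann data, run the $L^2$ energy argument, and close with the Poincar\'e inequality using $\int_\omega\Phi\,dx=0$. Your added discussion of why the energy identity is legitimate for weak solutions (via the regularity in Theorem~\ref{thm:well-posedness}) is a nice touch that the paper defers to later remarks; the only cosmetic difference is that the paper writes $\alpha_{\min}(t)=\inf_{x\in\omega}\alpha(x,t)$ and then invokes the uniform lower bound at the end, whereas you take the global infimum over $\Omega$ from the start.
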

\begin{proof}
Substituting \eqref{eq:density feedback law} into \eqref{eq:FP equation}, we obtain the closed-loop PDE
\begin{align*} 
\begin{split}
    \partial_t\Phi = \nabla\cdot(\alpha\nabla\Phi) & \quad\text{in}\quad \Omega, \\
    \Phi=\Phi_0 & \quad\text{on}\quad {\color{black}\omega\times\{0\}},\\
    \boldsymbol{n} \cdot\nabla\Phi=0 & \quad\text{on}\quad S(\Omega),
\end{split}
\end{align*}
which is a diffusion equation with Neumann boundary condition. 
Its exponential convergence is well-known \cite{evans1998partial}. 
We include the proof for completeness.
Consider a Lyapunov function $V(t)=\frac{1}{2}\|\Phi\|_{L^2({\color{black}\omega})}^2=\frac{1}{2}\int_{{\color{black}\omega}}\Phi^2 dx$.
Define $\alpha_{\text{min}}(t):=\inf_{x\in{\color{black}\omega}}\alpha(x,t)>0$. We have
\begin{align*}
    \Dot{V}(t) &= \int_{{\color{black}\omega}}\Phi \partial_t\Phi dx
    = \int_{{\color{black}\omega}}\Phi \nabla\cdot[\alpha(x,t)\nabla\Phi]dx\\
    &= \int_{\partial{{\color{black}\omega}}}\Phi[\alpha(x,t)\nabla\Phi\cdot\boldsymbol{n}]ds -\int_{{\color{black}\omega}} \alpha(x,t)\nabla\Phi\cdot\nabla\Phi dx \\
    &= - \alpha_{\text{min}}(t)\int_{{\color{black}\omega}} |\nabla\Phi|^2 dx 
    \leq -\frac{\alpha_{\text{min}}(t)}{C^2} \int_{\omega} |\Phi|^2 dx,
\end{align*}
{\color{black} where we used divergence theorem for the third equality, the boundary condition for the forth equality, and Poincar\'e inequality (for which we also use the fact that $\int_\omega\Phi dx\equiv0$) for the inequality, and $C>0$ is a constant depending on $\omega$. 
}
Since $\alpha$ has a uniform positive lower bound, we obtain exponential stability.
\end{proof}



The control law \eqref{eq:density feedback law} generates a dynamic velocity field on $\omega$ based on the real-time density {\color{black}(in a centralized way)}, where $\alpha$ is a design parameter for adjusting the local velocity magnitude. 
By following this velocity command, the swarm density is guaranteed to evolve towards $p^*(x)$.
In implementation, each robot computes its velocity command ${v}(X_i,t)$ using only function values around its position $X_i$ and then derives its own velocity tracking controller {\color{black}in a distributed way}. 
Thus, this control strategy is computationally efficient and scalable to swarm sizes. 
Note that for \eqref{eq:density feedback law} to be well-defined, we require its denominator $p>0$. This problem will be addressed when we replace $p$ with an estimated density later.
}

\subsection{Mean-field feedback control using density estimates}
The exponential stability result requires that $p(x,t)$ is known and positive for all $x\in\omega$. 
In this section, we use kernel density estimation (KDE) to obtain an estimate of $p$ that is always positive, and study its robustness with respect to estimation errors.

KDE is a non-parametric way to estimate an unknown density \cite{silverman1986density}. The robots' positions $\{X_{i}(t)\}_{i=1}^N$ can be seen as a set of $N$ samples of the common density $p(x,t)$. 
The density estimator is given by
\begin{equation} \label{eq:KDE}
\hat{p}(x,t) = \frac{1}{N h^{n}} \sum_{i=1}^{N} K\left(\frac{1}{h}\left(x-X_{i}(t)\right)\right),
\end{equation}
where $K(x)$ is a kernel function chosen to be the Gaussian kernel
$$
K(x)=\frac{1}{(2 \pi)^{n/2}} \exp \left(-\frac{1}{2}x^\intercal x\right),
$$
and $h$ is the bandwidth, usually chosen as a function of $N$ such that $\lim _{N \rightarrow \infty} h_N=0$ and $\lim _{N \rightarrow \infty} N h_{N}=\infty$. (Many boundary correction methods exist for refining the density estimate to have compact support \cite{silverman1986density}, so we shall not worry about this issue.) These estimates (and their derivatives) are in general uniformly consistent in the sense that $\lim_{N \rightarrow\infty}\|\hat{p}_{N}-p\|_{L^\infty}=0$ with probability 1. 

{\color{black}
With the density estimate, the control law is changed to
\begin{equation} \label{eq:density feedback law using estimation}
    {v}(x,t) = -\frac{\alpha(x,t)\nabla\big[\Hat{p}(x,t) - p^*(x)\big] -\nabla \big[\sigma(x,t)\Hat{p}(x,t)\big]}{\Hat{p}(x,t)},
\end{equation}
which is well-defined since $\Hat{p}>0$ with our choice of kernels.
\begin{remark}
Since the Gaussian kernels are positive $C^\infty$ functions and the Wiener processes have continuous paths, we have that $\Hat{p}\in C^\infty(\omega)\times C([0,T])$. 
Moreover, since $\omega$ is bounded and $N$ is finite, we have that $\inf_{(x,t)\in\Omega}\Hat{p}(x,t)>0$ and $\sup_{(x,t)\in\Omega}\partial_i^k\Hat{p}(x,t)<\infty$ for any $k\in\mathbb{N}$ as long as we fix $h$.
Hence, if $\alpha\in W^{1,\infty}(\omega)\times L^\infty([0,T])$, $p^*\in W^{2,\infty}(\omega)$ and $\sigma\in W^{2,\infty}(\omega)\times L^\infty([0,T])$, then $v_i\in L^\infty(\Omega)$ with $\partial_iv_i\in L^\infty(\Omega)$, i.e. $v$ satisfies the regularity conditions \eqref{eq:regularity condition1} and \eqref{eq:regularity condition2} in Theorem \ref{thm:well-posedness}.
\end{remark}

Now we study the robustness issue in terms of density estimation errors. 
{\color{black}Such errors can arise from not only the inherent error of any estimation algorithm, but also some ``artificial error'' we impose on $\Hat{p}$ to ensure that the feedback law \eqref{eq:density feedback law using estimation} remains bounded.
}
Since $p>0$, we can define $\epsilon(x,t):=\Hat{p}(x,t)/p(x,t)-1$, or equivalently $\Hat{p} = p\big(1 + \epsilon)$. Then $\epsilon=0$ if and only if $\Hat{p}=p$, for which we view $\epsilon$ as estimation errors. We also have $\epsilon>-1$ since $\Hat{p}>0$. Our idea is to treat a functional of $\epsilon(x,t)$, denoted by $d(t)$ (defined later), as external input and establish ISS property with respect to $d(t)$. In this way, the perturbed closed-loop system will be bounded by a function of $d(t)$ and be asymptotically stable when $d(t)=0$. 

{\color{black}
First, substituting $p=\Phi+p^*$ into \eqref{eq:FP equation}, then $\Phi$ satisfies
\begin{align} \label{eq:system of Phi}
\begin{split}
    \partial_t\Phi=-\nabla\cdot[{v}(\Phi+p^*)] + \Delta[\sigma(\Phi+p^*)] &\quad\text{in}\quad \Omega, \\
    \Phi=\Phi_0 &\quad\text{on}\quad {\color{black}\omega\times\{0\}},\\
    \boldsymbol{n}\cdot\big[{v}(\Phi+p^*) + \nabla\big(\sigma(\Phi+p^*)\big)\big]=0 &\quad\text{on}\quad S(\Omega). 
\end{split}
\end{align}
Since $p$ is a weak solution of \eqref{eq:FP equation}, then $\Phi$ is also a weak solution of \eqref{eq:system of Phi}.
Now, substitute \eqref{eq:density feedback law using estimation} into \eqref{eq:system of Phi}, and use $\Hat{p}=p(1+\epsilon)$. 
We obtain
\begin{align}
\begin{split}
    \partial_t\Phi &= \nabla\cdot\Big( \frac{p[\alpha\nabla(\Hat{p}-p^*)-\nabla(\sigma\Hat{p})]}{\Hat{p}} + \nabla(\sigma p) \Big)\\
    &= \nabla\cdot\frac{\alpha \nabla [\Phi(1+\epsilon)] + \alpha \nabla(\epsilon p^*) - \sigma(\Phi+p^*) \nabla\epsilon}{1+\epsilon} \\
    &= \nabla\cdot(\alpha\nabla\Phi) + \nabla\cdot\Big((\alpha-\sigma)\Phi\frac{\nabla\epsilon}{1+\epsilon}\Big) \\
    &\quad + \nabla\cdot\Big((\alpha-\sigma)p^*\frac{\nabla\epsilon}{1+\epsilon}\Big) + \nabla\cdot\Big(\alpha\nabla p^*\frac{\epsilon}{1+\epsilon}\Big).
\end{split}
\end{align}
Define $u_1=\frac{\nabla\epsilon}{1+\epsilon}$ and $u_2=\frac{\epsilon}{1+\epsilon}$.
Then the perturbed closed-loop system is given by
\begin{align}\label{eq:perturbed system}
\begin{split}
    \partial_t\Phi &= \nabla\cdot(\alpha\nabla\Phi) + \nabla\cdot\big((\alpha-\sigma)\Phi u_1\big) \\
    & \quad + \nabla\cdot\big((\alpha-\sigma)p^*u_1\big) + \nabla\cdot\big(\alpha u_2\nabla p^*\big)  \quad\text{in}\quad \Omega,
\end{split}
\end{align}
with initial and boundary conditions
\begin{align*}
\begin{split}
    &\Phi=\Phi_0 \text{ on } {\color{black}\omega\times\{0\}},\\
    &\boldsymbol{n}\cdot\big[\alpha\nabla\Phi + (\alpha-\sigma)\Phi u_1 + (\alpha-\sigma)p^*u_1 + \alpha u_2\nabla p^*\big]=0 \text{ on } S(\Omega).
\end{split}
\end{align*}
By defining
\begin{align*}
\begin{split}
    &A_1f=\nabla\cdot(\alpha\nabla f), \qquad\qquad A_2(f,g)=\nabla\cdot\big((\alpha-\sigma)fg\big), \\
    &B_1f=\nabla\cdot\big((\alpha-\sigma)p^*f\big), \quad B_2f=\nabla\cdot\big(\alpha f\nabla p^*\big),
\end{split}
\end{align*}
we can rewrite \eqref{eq:perturbed system} in a form of an abstract bilinear control system:
\begin{align}\label{eq:abstract bilinear system}
\begin{split}
    \Dot{\Phi}=A_1\Phi + A_2(\Phi,u_1) + B_1u_1 + B_2u_2,\quad \Phi(0)=\Phi_0
\end{split}
\end{align}
where $A_1,B_1,B_2$ are linear operators and $A_2$ is bilinear.
Hence, this system is essentially nonlinear.
To study its ISS property, we first present the following theorem which exploits Lyapunov functions to verify the ISS property for nonlinear and time-varying infinite-dimensional control systems.

\begin{theorem}\label{thm:(L)ISS-Lyapunov function}
Let $\Sigma=\left(X, U_{c}, \phi\right)$ be a control system, and $x \equiv 0$ be its equilibrium point.
Assume for all $u \in U_{c}$ and for all $s \geq 0$ a function $\tilde{u}$, defined by $\tilde{u}(\tau)=u(\tau+s)$ for all $\tau \geq 0$, belongs to $U_{c}$ and $\|\tilde{u}\|_{U_{c}} \leq\|u\|_{U_{c}}$.
If $\Sigma$ possesses an (L)ISS-Lyapunov function, then it is (L)ISS.
\end{theorem}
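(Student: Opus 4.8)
The plan is to follow the classical Lyapunov-to-ISS argument of Sontag--Wang, adapted to the non-autonomous infinite-dimensional setting, by reducing the estimate to a scalar comparison along trajectories. Fix $(t_0,\phi_0,u)$ and write $x(t)=\phi(t,t_0,\phi_0,u)$ and $v(t)=V(t,x(t))$. The core idea is to split the state space into a region where the Lyapunov function strictly dissipates and a residual region whose size is controlled by $\|u\|_{U_c}$. Concretely, I would introduce the threshold level $c^\star:=\psi_2(\chi(\|u\|_{U_c}))$ and aim to prove two facts: first, whenever $v(t)\geq c^\star$ the function $v$ decays like a $\mathcal{KL}$ function of $(v(t_0),t-t_0)$; second, the sublevel set $\{x:V(t,x)\leq c^\star\}$ is forward invariant. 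Combining these yields $v(t)\leq\max\{\beta_0(v(t_0),t-t_0),c^\star\}$ for some $\beta_0\in\mathcal{KL}$, after which the sandwich bound (i) of Definition~\ref{dfn:(L)ISS-Lyapunov function} converts this into the desired estimate on $\|x(t)\|_X$.

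First I would translate the implication form of the dissipation inequality into a statement purely about $v$. From $V(t,x)\leq\psi_2(\|x\|_X)$ one sees that $v(t)\geq c^\star$ forces $\|x(t)\|_X\geq\chi(\|u\|_{U_c})$, so that condition (ii) of Definition~\ref{dfn:(L)ISS-Lyapunov function} gives $\dot V_u(t,x(t))\leq-W(\|x(t)\|_X)$. Using $\psi_2^{-1}(v)\leq\|x\|_X\leq\psi_1^{-1}(v)$, I would lower-bound $W(\|x\|_X)$ by a continuous positive definite function of $v$ alone, say $\alpha_3(v):=\inf_{\psi_2^{-1}(v)\leq r\leq\psi_1^{-1}(v)}W(r)>0$ for $v>0$, obtaining the autonomous differential inequality $\dot v(t)\leq-\alpha_3(v(t))$ valid on $\{v\geq c^\star\}$. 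A comparison lemma for the scalar equation $\dot y=-\alpha_3(y)$ then furnishes the bound $v(t)\leq\beta_0(v(t_0),t-t_0)$ as long as $v$ stays above $c^\star$. Forward invariance of $\{v\leq c^\star\}$ follows because at any time where $v(t)=c^\star>0$ we have $\dot v(t)\leq-\alpha_3(c^\star)<0$, so $v$ cannot cross the level upward. Finally, inverting $\psi_1(\|x\|_X)\leq v(t)$ and $v(t_0)\leq\psi_2(\|\phi_0\|_X)$ produces $\beta(\|\phi_0\|_X,t-t_0):=\psi_1^{-1}(\beta_0(\psi_2(\|\phi_0\|_X),t-t_0))\in\mathcal{KL}$ and $\gamma(\|u\|_{U_c}):=\psi_1^{-1}(\psi_2(\chi(\|u\|_{U_c})))\in\mathcal{K}$; since $\max\{\beta,\gamma\}\leq\beta+\gamma$, this immediately gives the additive estimate \eqref{eq:(L)ISS}.

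The shift-invariance hypothesis --- that $\tilde u(\tau)=u(\tau+s)$ lies in $U_c$ with $\|\tilde u\|_{U_c}\leq\|u\|_{U_c}$ --- together with the semigroup and causality properties of Definition~\ref{dfn:control system}, is exactly what lets this argument be run starting from an arbitrary time $s$ rather than only from $t_0$: it guarantees that restarting the trajectory at $x(s)$ under the shifted input keeps the input norm, and hence the threshold $c^\star$, unchanged, so the local dissipation estimate propagates consistently in time and the comparison argument is not spoiled by the explicit time dependence of $V$. For the LISS version I would additionally restrict to $\|\phi_0\|_X\leq\rho_x$ and $\|u\|_{U_c}\leq\rho_u$ and shrink these radii if necessary so that the relevant sublevel sets remain inside the domain $D$ on which $V$ is defined, ensuring that every invocation of Definition~\ref{dfn:(L)ISS-Lyapunov function} is legitimate along the whole trajectory.

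The step I expect to be the main obstacle is the rigorous comparison argument in the infinite-dimensional, non-smooth setting: since $V$ is only continuous and $\dot V_u$ is the upper-right Dini derivative $\varlimsup_{\Delta t\to+0}$, I cannot differentiate $v(t)$ classically, and I must justify both the passage from the Dini inequality $\dot v\leq-\alpha_3(v)$ to the $\mathcal{KL}$ decay bound and the forward invariance of the sublevel set using a Dini-derivative comparison principle rather than the usual smooth one. Care is also needed to verify that $\alpha_3$ is genuinely positive definite and that the infimum defining it is attained on the compact interval, and to confirm that $t\mapsto v(t)$ is continuous --- which follows from the continuity property in Definition~\ref{dfn:control system} together with continuity of $V$ --- so that the level-crossing arguments underlying forward invariance are meaningful.
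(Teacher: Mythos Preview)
Your proposal is correct and follows essentially the same route as the paper's proof: define the threshold level $c^\star=\psi_2(\chi(\|u\|_{U_c}))$, show forward invariance of the corresponding sublevel set using the shift assumption, obtain a $\mathcal{KL}$ decay of $V$ outside via a comparison with $\dot y=-\alpha(y)$, and combine to get $\gamma=\psi_1^{-1}\circ\psi_2\circ\chi$ and $\beta=\psi_1^{-1}\circ\tilde\beta(\psi_2(\cdot),\cdot)$. Your discussion is in fact more careful than the paper's in two respects---the explicit construction of $\alpha_3$ as an infimum of $W$ over $[\psi_2^{-1}(v),\psi_1^{-1}(v)]$ (needed because $W$ is only positive definite, not monotone) and the acknowledgment that the comparison argument must be justified for Dini derivatives---both of which the paper glosses over.
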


\begin{proof}
The proof is included in the Appendices.
It is based on the proof in \cite{dashkovskiy2013input}, but extends it to time-varying control systems.
\end{proof}

The assumption of $U_c$ in Theorem \ref{thm:(L)ISS-Lyapunov function} holds for many usual function classes, including $PC\left(\mathbb{R}_{+}; U\right), L^{p}\left(\mathbb{R}_{+}; U\right), p\in[1,\infty]$,
Sobolev spaces, etc \cite{dashkovskiy2013input}.
In our problem, $u_1,u_2\in PC\left(\mathbb{R}_{+}; U\right)$ because Wiener processes have continuous paths.

\begin{remark}
We point out that in the development of the ISS notion, there is no reference to specific notion of solution.
Instead, it is based on the concept of an abstract control system defined in Definition \ref{dfn:control system}.
Hence, as long as the notion of solution (e.g. weak, mild, strong, classical) of the infinite-dimensional problem is selected such that the properties (especially the continuity and semigroup property) in Definition \ref{dfn:control system} are satisfied, and as long as the derivative $\Dot{V}_u$ defined in Definition \ref{dfn:(L)ISS-Lyapunov function} exists for almost all $t\geq0$, then we can exploit (L)ISS-Lyapunov functions to study the ISS property.
In the existing literature, the notion of mild solution, defined using $C_0$ semigroups, is more commonly used \cite{dashkovskiy2013input}.
It however may lose many useful structures and properties of the specific equation (especially PDEs) under study, and requires more complicated techniques to characterize time-varying and nonlinear systems.
The notion of weak solution adopted in this work is standard for parabolic PDEs in the PDE literature, which satisfies the properties in Definition \ref{dfn:control system} when it exists and is unique.
(In fact, for linear PDEs with time-independent coefficients, these two notions are equivalent; see page 105 in \cite{curtain1995introduction}.)
By using weak solutions, we are able to obtain the necessary solution properties for studying the ISS property of system \eqref{eq:perturbed system} even with time-varying coefficients and unbounded control operators, which could have been difficult to study if using mild solutions.
\end{remark}
}

\begin{theorem} \label{thm:ISS}
(\textbf{LISS}). Consider the PDE system \eqref{eq:FP equation} with control law \eqref{eq:density feedback law using estimation}. Assume the regularity conditions \eqref{eq:regularity condition1} and \eqref{eq:regularity condition2} in Theorem \ref{thm:well-posedness} are satisfied and $p_0>0$. Define
\begin{equation} \label{eq:external input}
    d(t):=\max\left\{\left\|\frac{\nabla\epsilon}{1+\epsilon}\right\|_{L^\infty(\omega)}(t),\left\|\frac{\epsilon}{1+\epsilon}\right\|_{L^2(\omega)}(t)\right\}
\end{equation}
Then $d(t)=0$ if and only if $\epsilon(x,t)=0,\forall x$, and $\Phi$ is LISS in $L^2$ with respect to $d$ when
\begin{equation} \label{eq1:local bound for epsilon}
     \left\|\frac{\nabla\epsilon}{1+\epsilon}\right\|_{L^\infty(\omega)} < \frac{\alpha_{\text{min}}\theta}{C\|\alpha-\sigma\|_{L^\infty(\omega)}},
\end{equation}
where $C>0$, $\theta\in(0,1)$ are constants, and $\alpha_{\text{min}}(t):=\inf_{x\in{\color{black}\omega}}\alpha(x,t)$.
\end{theorem}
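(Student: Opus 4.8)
The plan is to exhibit the natural energy functional $V(t)=\tfrac12\|\Phi\|_{L^2(\omega)}^2$ as an LISS-Lyapunov function for the abstract system \eqref{eq:abstract bilinear system} with external input $d(t)$, and then invoke Theorem \ref{thm:(L)ISS-Lyapunov function} to conclude LISS. Condition (i) of Definition \ref{dfn:(L)ISS-Lyapunov function} holds trivially with $\psi_1(r)=\psi_2(r)=\tfrac12 r^2\in\mathcal{K}_\infty$, and the weak solution satisfies $\partial_t\Phi\in L^2(\Omega)$ (as noted after Theorem \ref{thm:well-posedness}), so that $t\mapsto V(t)$ is absolutely continuous and $\dot V$ exists for almost every $t$. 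The equivalence $d(t)=0\iff\epsilon(\cdot,t)=0$ is immediate: $d=0$ forces $u_2=\epsilon/(1+\epsilon)=0$ in $L^2(\omega)$, and since $s\mapsto s/(1+s)$ vanishes only at $s=0$ this gives $\epsilon=0$ a.e., whence $u_1=\nabla\epsilon/(1+\epsilon)=0$; the converse is obvious.

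The core computation is to differentiate $V$ along \eqref{eq:perturbed system}. I would write the right-hand side as $\nabla\cdot\boldsymbol{J}$ with total flux $\boldsymbol{J}=\alpha\nabla\Phi+(\alpha-\sigma)\Phi u_1+(\alpha-\sigma)p^*u_1+\alpha u_2\nabla p^*$, multiply by $\Phi$, integrate over $\omega$, and apply the divergence theorem. The no-flux boundary condition $\boldsymbol n\cdot\boldsymbol J=0$ on $S(\Omega)$ annihilates the boundary integral, leaving $\dot V=-\int_\omega\nabla\Phi\cdot\boldsymbol J\,dx$, i.e. four terms: the dissipative $T_1=-\int_\omega\alpha|\nabla\Phi|^2\,dx\leq-\alpha_{\text{min}}\|\nabla\Phi\|_{L^2}^2$, the bilinear coupling $T_2=-\int_\omega(\alpha-\sigma)\Phi(\nabla\Phi\cdot u_1)\,dx$, and the two genuinely input-driven terms $T_3=-\int_\omega(\alpha-\sigma)p^*(\nabla\Phi\cdot u_1)\,dx$ and $T_4=-\int_\omega\alpha u_2(\nabla\Phi\cdot\nabla p^*)\,dx$.

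The hard part is the bilinear term $T_2$, which couples the state, its gradient and the input and is the sole reason the result is only \emph{local}. I would bound $|T_2|\leq\|\alpha-\sigma\|_{L^\infty}\|u_1\|_{L^\infty}\|\Phi\|_{L^2}\|\nabla\Phi\|_{L^2}$ by Cauchy--Schwarz, then use the Poincar\'e inequality (Lemma \ref{lmm:Poincare inequality}, applicable because $\int_\omega\Phi\,dx\equiv0$) to replace $\|\Phi\|_{L^2}$ by $C\|\nabla\Phi\|_{L^2}$, giving $|T_2|\leq C\|\alpha-\sigma\|_{L^\infty}\|u_1\|_{L^\infty}\|\nabla\Phi\|_{L^2}^2$. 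Precisely the smallness hypothesis \eqref{eq1:local bound for epsilon} then makes $C\|\alpha-\sigma\|_{L^\infty}\|u_1\|_{L^\infty}<\alpha_{\text{min}}\theta$, so $T_2$ can be absorbed into $T_1$, leaving a residual dissipation $-\alpha_{\text{min}}(1-\theta)\|\nabla\Phi\|_{L^2}^2$. The terms $T_3,T_4$ are estimated by Cauchy--Schwarz, using $\|u_1\|_{L^\infty}\leq d$ and $\|u_2\|_{L^2}\leq d$, as $|T_3|+|T_4|\leq b\,\|\nabla\Phi\|_{L^2}\,d(t)$ for a constant $b$ depending on $\|\alpha-\sigma\|_{L^\infty}$, $\|\alpha\|_{L^\infty}$, $\|p^*\|_{L^\infty}$, $\|\nabla p^*\|_{L^\infty}$ and $|\omega|$.

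Combining everything yields $\dot V\leq-a\|\nabla\Phi\|_{L^2}^2+b\|\nabla\Phi\|_{L^2}d(t)$ with $a=\alpha_{\text{min}}(1-\theta)$, bounded below by $(\inf_\Omega\alpha)(1-\theta)>0$. To verify the LISS implication \eqref{eq:LISS implication form 1}, I would retain a fraction of the dissipation: for any $\kappa\in(0,1)$ the inequality $\|\Phi\|_{L^2}\geq\chi(d):=\tfrac{Cb}{\kappa a}d$ forces, via Poincar\'e, $\|\nabla\Phi\|_{L^2}\geq\tfrac{b}{\kappa a}d$, so that $b\|\nabla\Phi\|_{L^2}d\leq\kappa a\|\nabla\Phi\|_{L^2}^2$ and hence $\dot V\leq-(1-\kappa)a\|\nabla\Phi\|_{L^2}^2\leq-\tfrac{(1-\kappa)a}{C^2}\|\Phi\|_{L^2}^2=:-W(\|\Phi\|_{L^2})$. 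With $\chi\in\mathcal K$ and $W$ continuous positive definite, $V$ is an LISS-Lyapunov function, and Theorem \ref{thm:(L)ISS-Lyapunov function} delivers the claimed LISS estimate in $L^2$ with respect to $d$, the locality residing entirely in the input bound \eqref{eq1:local bound for epsilon}.
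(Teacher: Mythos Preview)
Your proposal is correct and follows essentially the same approach as the paper: the same Lyapunov functional $V=\tfrac12\|\Phi\|_{L^2}^2$, the same integration-by-parts against the flux form of \eqref{eq:perturbed system} using the no-flux boundary condition, the same H\"older/Poincar\'e estimates (with $\int_\omega\Phi\,dx=0$), and the same appeal to Theorem~\ref{thm:(L)ISS-Lyapunov function}. The only cosmetic difference is bookkeeping: you first absorb the bilinear term $T_2$ into the dissipation via Poincar\'e under \eqref{eq1:local bound for epsilon} and then build a linear gain $\chi(d)=\tfrac{Cb}{\kappa a}d$ from $T_3,T_4$, whereas the paper keeps $T_2$ alongside $T_3,T_4$ after the $\theta$-split and obtains a gain $\chi(d)$ with $\frac{\alpha_{\min}\theta}{C}-\|\alpha-\sigma\|_{L^\infty}d$ in the denominator; both organizations are equivalent and yield the same LISS conclusion.
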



\begin{proof}
Since $\Phi$ is a weak solution of \eqref{eq:system of Phi}, according to \eqref{eq:energy identity}, we have the following energy identity:
\begin{align}\label{eq:Phi energy identity}
\begin{split}
    &\frac{1}{2}\int_{{\color{black}\omega}}\Phi^2dx-\frac{1}{2}\int_{{\color{black}\omega}}{\color{black}\Phi_0}^2dx\\
    &\quad=\int_0^t\int_{{\color{black}\omega}}\nabla\Phi\cdot\Big[{v}(\Phi+p^*) - \nabla\big(\sigma(\Phi+p^*)\big)\Big]dxd\tau.
\end{split}
\end{align}
Consider an LISS Lyapunov function $V(t)=\frac{1}{2}\|\Phi\|_{L^2({\color{black}\omega})}^2$.
Then
$$
V(t)-V(0)=\int_0^t\int_{{\color{black}\omega}}\nabla\Phi\cdot\Big[{v}(\Phi+p^*) - \nabla\big(\sigma(\Phi+p^*)\big)\Big]dxd\tau.
$$
Hence, $V$ is absolutely continuous on $[0,t]$ and, for almost every $t\in[0,T]$,
\begin{align}\label{eq:V derivative}
\begin{split}
    \Dot{V}(t)&=\int_{{\color{black}\omega}}\nabla\Phi\cdot\Big[{v}(\Phi+p^*) - \nabla\big(\sigma(\Phi+p^*)\big)\Big]dx\\
    &=\int_{{\color{black}\omega}}\nabla\Phi\cdot[{v}p - \nabla(\sigma p)]dx
\end{split}
\end{align}
Now substitute the control law \eqref{eq:density feedback law using estimation} into \eqref{eq:V derivative}, and use $\Hat{p}=p(1+\epsilon)$ and $p=\Phi+p^*$. 
We have
\begin{align*}
    \Dot{V} 
    &= - \int_{\omega} \nabla\Phi\cdot\Big( \frac{p[\alpha\nabla(\Hat{p}-p^*)-\nabla(\sigma\Hat{p})]}{\Hat{p}} + \nabla(\sigma p) \Big) dx\\
    &= \int_{\omega}-\nabla \Phi \cdot \Big(\frac{\alpha \nabla [\Phi(1+\epsilon)] + \alpha \nabla(\epsilon p^*) - \sigma(\Phi+p^*) \nabla\epsilon}{1+\epsilon} \Big) dx\\
    &= \int_{\omega}-\alpha|\nabla \Phi|^{2}-\nabla \Phi \cdot \frac{(\alpha-\sigma)\left(\Phi+p^{*}\right) \nabla \epsilon+\alpha \epsilon \nabla p^{*}}{1+\epsilon} dx\\
    &\leq \int_{\omega}-\alpha|\nabla \Phi|^{2} + \left|\frac{(\alpha-\sigma)\Phi\nabla\Phi\cdot\nabla \epsilon}{1+\epsilon}\right|\\
    &\quad + \left|\frac{(\alpha-\sigma)p^{*}\nabla\Phi\cdot\nabla \epsilon}{1+\epsilon}\right| + \left|\frac{\alpha\epsilon\nabla\Phi\cdot\nabla p^{*}}{1+\epsilon}\right| dx.
\end{align*}
Let $\alpha_{\text{min}}(t):=\inf_{x\in{\color{black}\omega}}\alpha(x,t)>0$, choose a constant $\theta\in(0,1)$ to split the first term into two terms, and apply the H\"older's inequality for the remaining terms. Then we have
\begin{equation*} \label{eq:proof}
    \begin{aligned}
    \Dot{V} 
    &\leq -\alpha_{\text{min}}(1-\theta)\|\nabla\Phi\|_{L^2(\omega)}^2 -\alpha_{\text{min}}\theta \|\nabla\Phi\|_{L^2(\omega)}^2 \\
    &\quad + \|\nabla\Phi\|_{L^2(\omega)}\|\Phi\|_{L^2(\omega)}\|\alpha-\sigma\|_{L^\infty(\omega)}\left\|\frac{\nabla\epsilon}{1+\epsilon}\right\|_{L^\infty(\omega)}\\
    &\quad + \|\nabla\Phi\|_{L^2(\omega)}\|p^*\|_{L^2(\omega)}\|\alpha-\sigma\|_{L^\infty(\omega)}\left\|\frac{\nabla\epsilon}{1+\epsilon}\right\|_{L^\infty}\\
    &\quad + \|\nabla\Phi\|_{L^2(\omega)}\|\alpha\nabla p^*\|_{L^\infty(\omega)}\left\|\frac{\epsilon}{1+\epsilon}\right\|_{L^2(\omega)} \\
    &\qquad \text{(by the Poincar\'e inequality)} \\
    &\leq -\frac{\alpha_{\text{min}}(1-\theta)}{C^2} \|\Phi\|_{L^2(\omega)}^2
    - \frac{\alpha_{\text{min}}\theta}{C} \|\nabla\Phi\|_{L^2(\omega)} \|\Phi\|_{L^2(\omega)}\\
    &\quad + \|\nabla\Phi\|_{L^2(\omega)}\|\Phi\|_{L^2(\omega)}\|\alpha-\sigma\|_{L^\infty(\omega)}\left\|\frac{\nabla\epsilon}{1+\epsilon}\right\|_{L^\infty(\omega)}\\
    &\quad + \|\nabla\Phi\|_{L^2(\omega)}\|p^*\|_{L^2(\omega)}\|\alpha-\sigma\|_{L^\infty(\omega)}\left\|\frac{\nabla\epsilon}{1+\epsilon}\right\|_{L^\infty(\omega)}\\
    &\quad + \|\nabla\Phi\|_{L^2(\omega)}\|\alpha\nabla p^*\|_{L^\infty(\omega)}\left\|\frac{\epsilon}{1+\epsilon}\right\|_{L^2(\omega)}.
\end{aligned}
\end{equation*}
Thus, we would have
$$
\Dot{V}\leq -\frac{\alpha_{\text{min}}(1-\theta)}{C^2}\|\Phi\|_{L^2(\omega)}^2 =:-W(\|\Phi\|_{L^2(\omega)}),
$$
if
\begin{equation} \label{eq2:local bound for epsilon}
    \begin{aligned} 
    \frac{\alpha_{\text{min}}\theta}{C} \|\Phi\|_{L^2(\omega)} 
    &\geq \|\Phi\|_{L^2(\omega)}\|\alpha-\sigma\|_{L^\infty(\omega)}\left\|\frac{\nabla\epsilon}{1+\epsilon}\right\|_{L^\infty(\omega)}\\
    &\quad + \|p^*\|_{L^2(\omega)}\|\alpha-\sigma\|_{L^\infty(\omega)}\left\|\frac{\nabla\epsilon}{1+\epsilon}\right\|_{L^\infty(\omega)}\\
    &\quad + \|\alpha\nabla p^*\|_{L^\infty(\omega)}\left\|\frac{\epsilon}{1+\epsilon}\right\|_{L^2(\omega)}.
\end{aligned}
\end{equation}
Inequality \eqref{eq2:local bound for epsilon} holds if
\begin{equation*}
    \frac{\alpha_{\text{min}}\theta}{C} > \|\alpha-\sigma\|_{L^\infty(\omega)}\left\|\frac{\nabla\epsilon}{1+\epsilon}\right\|_{L^\infty(\omega)},
\end{equation*}
and 
\begin{align} \label{eq:Phi greater equal}
\begin{split}
    \|\Phi\|_{L^2(\omega)} 
    \geq \frac{
    \displaystyle\|p^*\|_{L^2(\omega)}\|\alpha-\sigma\|_{L^\infty(\omega)}\left\|\frac{\nabla\epsilon}{1+\epsilon}\right\|_{L^\infty(\omega)} 
    }{
    \displaystyle\frac{\alpha_{\text{min}}\theta}{C} - \|\alpha-\sigma\|_{L^\infty(\omega)}\left\|\frac{\nabla\epsilon}{1+\epsilon}\right\|_{L^\infty(\omega)}
    }\\
    +\frac{
    \displaystyle\|\alpha\nabla p^*\|_{L^\infty(\omega)}\left\|\frac{\epsilon}{1+\epsilon}\right\|_{L^2(\omega)}
    }{
    \displaystyle\frac{\alpha_{\text{min}}\theta}{C} - \|\alpha-\sigma\|_{L^\infty(\omega)}\left\|\frac{\nabla\epsilon}{1+\epsilon}\right\|_{L^\infty(\omega)}
    } .
\end{split}
\end{align}
With $d(t)=\max\left\{\left\|\frac{\nabla\epsilon}{1+\epsilon}\right\|_{L^\infty(\omega)}(t),\left\|\frac{\epsilon}{1+\epsilon}\right\|_{L^2(\omega)}(t)\right\}$, we obtain that \eqref{eq:Phi greater equal} holds if
\begin{align*}
    \|\Phi\|_{L^2(\omega)} 
    \geq \frac{\displaystyle\|p^*\|_{L^2(\omega)}\|\alpha-\sigma\|_{L^\infty(\omega)}d + \|\alpha\nabla p^*\|_{L^\infty(\omega)}d } {\displaystyle\frac{\alpha_{\text{min}}\theta}{C} - \|\alpha-\sigma\|_{L^\infty(\omega)}d} =:\chi(d)
\end{align*}
Since $W$ is positive definite and $\chi\in\mathcal{K}$, according to Theorem \ref{thm:(L)ISS-Lyapunov function}, we obtain the LISS property.
\end{proof}

Theorem \ref{thm:ISS} indicates that the closed-loop system using control law \eqref{eq:density feedback law using estimation} remains bounded as long as the density estimation error satisfies the constraint \eqref{eq1:local bound for epsilon}. Note that we can drop the constraint \eqref{eq1:local bound for epsilon} and obtain (global) ISS property if we let $\alpha=\sigma$, in the price of possibly slow convergence since $\sigma$ is usually small. Therefore, the design parameter $\alpha$ yields a trade-off between convergence speed and robustness in the sense that a larger $\alpha$ produces faster convergence but also reduces the convergent domain in \eqref{eq1:local bound for epsilon}. Nevertheless, one can also increase the probability of satisfying \eqref{eq1:local bound for epsilon} by increasing the number of robots $N$, which is the consistency property of KDE.

\begin{remark}
The term $\nabla\epsilon$ in \eqref{eq:external input} is caused by the gradient operator $\nabla$ in \eqref{eq:density feedback law using estimation}, which is unavoidable because the gradient operator is unbounded, that is, we cannot bound $\nabla\epsilon$ using $\epsilon$. In fact, $\epsilon$ also acts on the system \eqref{eq:system of Phi} through the divergence operator $\nabla\cdot$ on the right-hand side. The reason why it does not show up in \eqref{eq:external input} is that in the formulation of weak solutions \eqref{eq:weak solution}, by using integration by parts, the divergence actually acts on the test functions $\eta$ (corresponding to $\Phi$ in \eqref{eq:Phi energy identity}) and produces a boundary term on $S(\Omega)$ which eventually disappears due to the reflecting boundary condition in \eqref{eq:FP equation}.
{\color{black}
It would have been difficult to deal with the unbounded divergence operator if we adopt mild solutions for ISS analysis.
}
\end{remark}

\begin{remark} \label{remark:distributed KDE}
{\color{black}
We shall clarify that our swarm control strategy essentially consists of two parts: centralized velocity field design and distributed velocity tracking.
This work mainly focuses on the design of velocity field, which is centralized because it requires knowing the positions of all the robots to estimate their density.
This can be implemented by a monitoring system which collects the robots' positions to estimate the global density and then broadcasts the velocity field to the robots. 
The individual velocity tracking control is however distributed because each robot receives its reference velocity command and then derive its own control signal accordingly (which is a well-studied control problem for mobile robots).
}
\end{remark}
}

\section{Simulation studies} \label{section:simulation}
An agent-based simulation using 1024 robots is performed on Matlab to verify the proposed control law. We set $\omega=(0,1)^2$, $\sigma=0.0005$ and $\alpha=0.03$. 
{\color{black}
Each robot is simulated by a Langevin equation \eqref{eq:Langevin equation} under the velocity command \eqref{eq:density feedback law using estimation}. The robots' initial positions are drawn from a uniform distribution. The desired density $p^*(x)$ is illustrated in Fig. \ref{fig:desired density} (which is $C^\infty$ and lower bounded by a very small positive constant due to smoothing preprocessing). 
}
KDE is used to obtain the density estimate $\Hat{p}(x,t)$, in which we set $h=0.045$. Numerical computation of the velocity field \eqref{eq:density feedback law using estimation} is based on finite difference. Specifically, $\omega$ is discretized into a $64\times64$ grid, and the time difference is $0.02s$. 

\begin{figure}[hbt!]
\setlength{\abovecaptionskip}{0.0cm}
\setlength{\belowcaptionskip}{-0.0cm}
    \centering
    \begin{subfigure}[b]{0.24\textwidth}
        \centering
        \includegraphics[width=\textwidth]{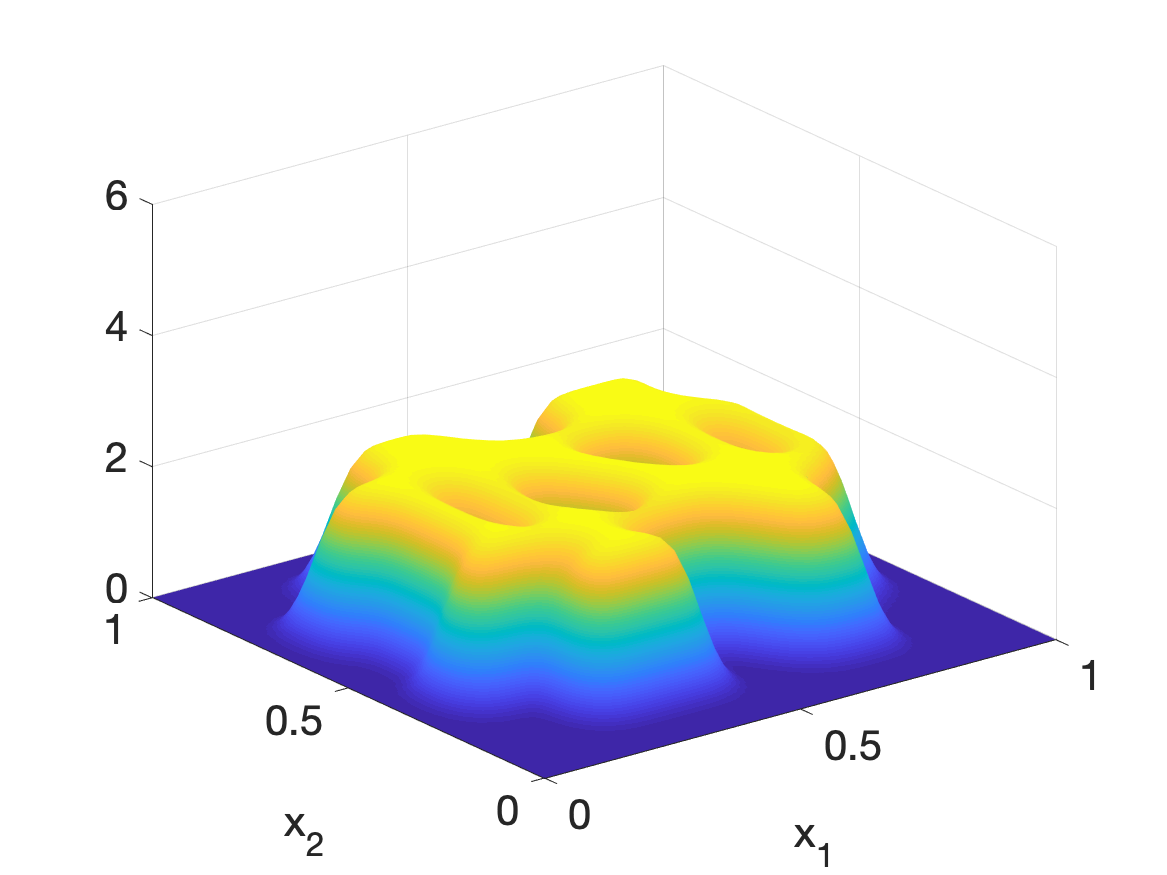}
        \caption{The desired density $p^*(x)$.}
        \label{fig:desired density}
    \end{subfigure}
    \begin{subfigure}[b]{0.24\textwidth}
        \centering
        \includegraphics[width=\textwidth]{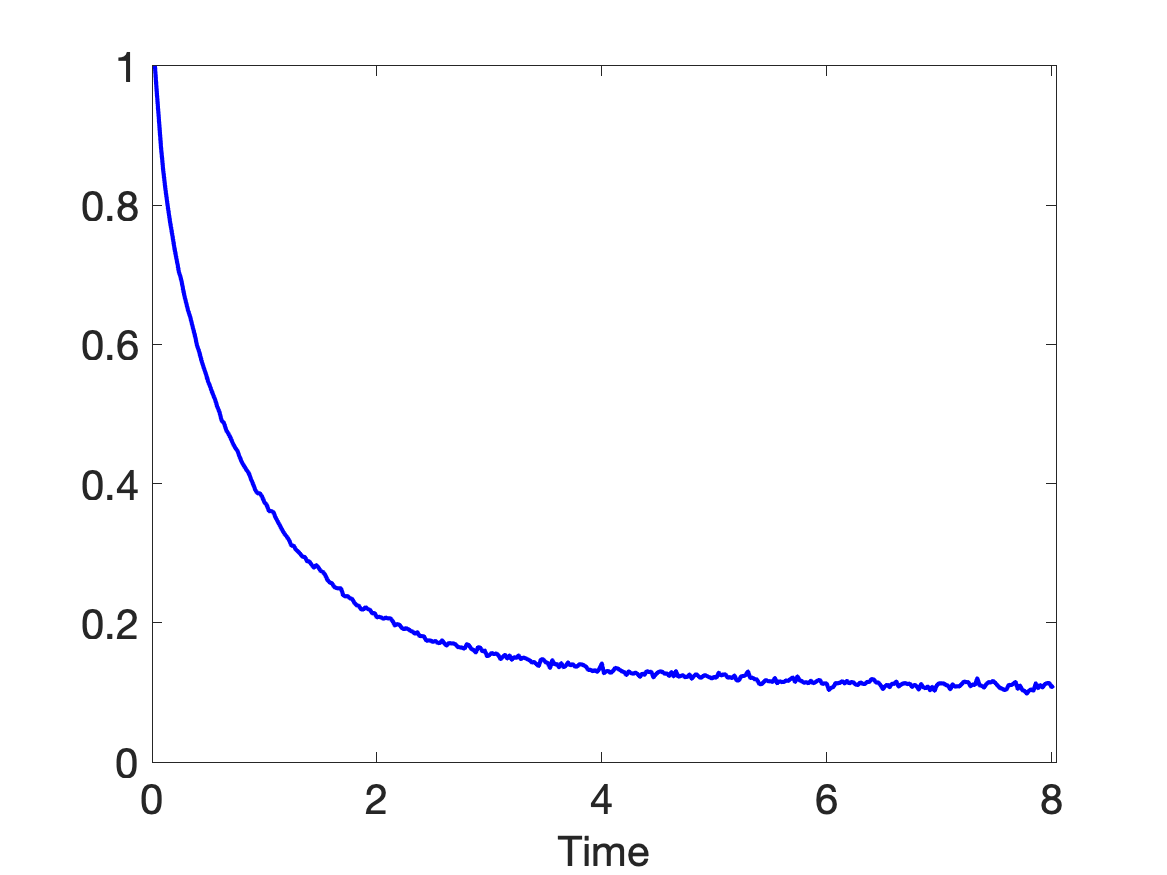}
        \caption{The convergence error.}
        \label{fig:convergence error}
    \end{subfigure}
    \caption{}
\end{figure}

\begin{figure*}[t]
\setlength{\abovecaptionskip}{0.0cm}
\setlength{\belowcaptionskip}{-0.5cm}
    \centering
    \begin{subfigure}[b]{0.24\textwidth}
        \centering
        \includegraphics[width=\textwidth]{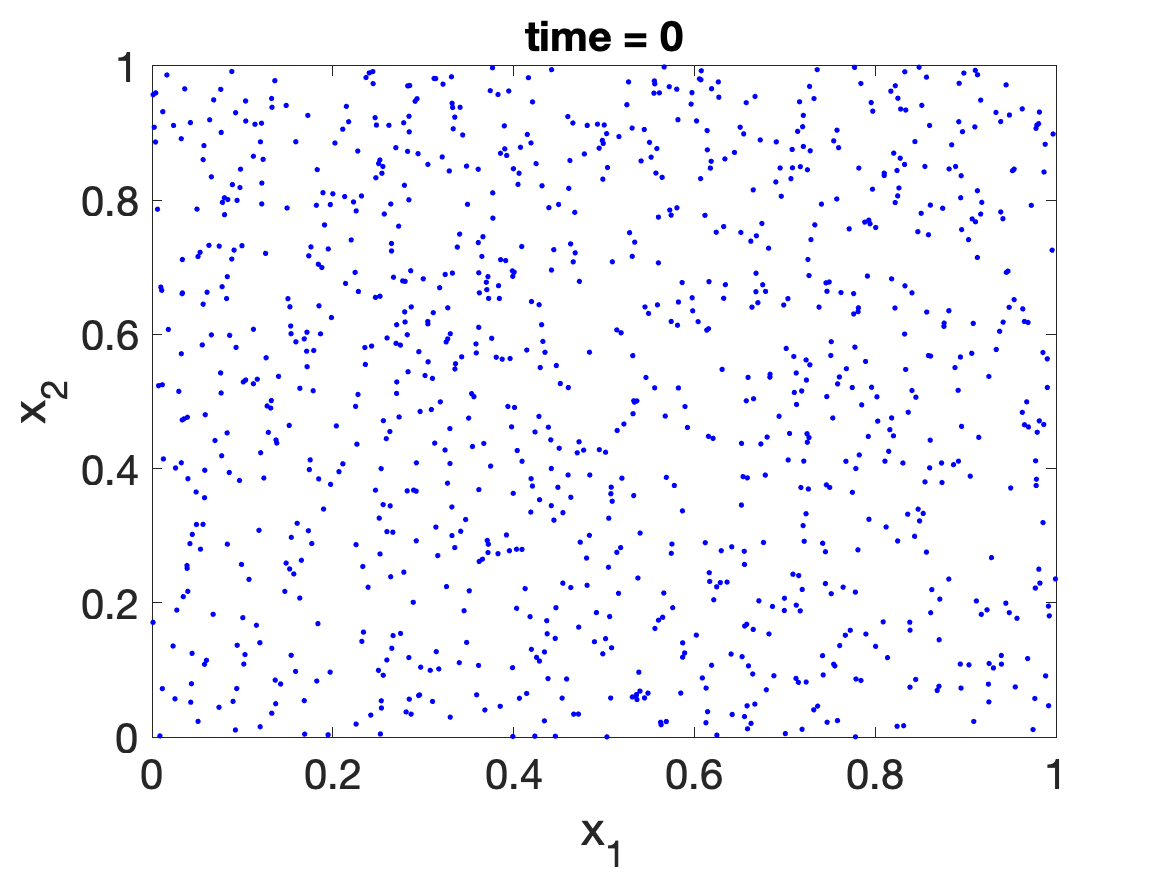}
    \end{subfigure}
    \begin{subfigure}[b]{0.24\textwidth}
        \centering
        \includegraphics[width=\textwidth]{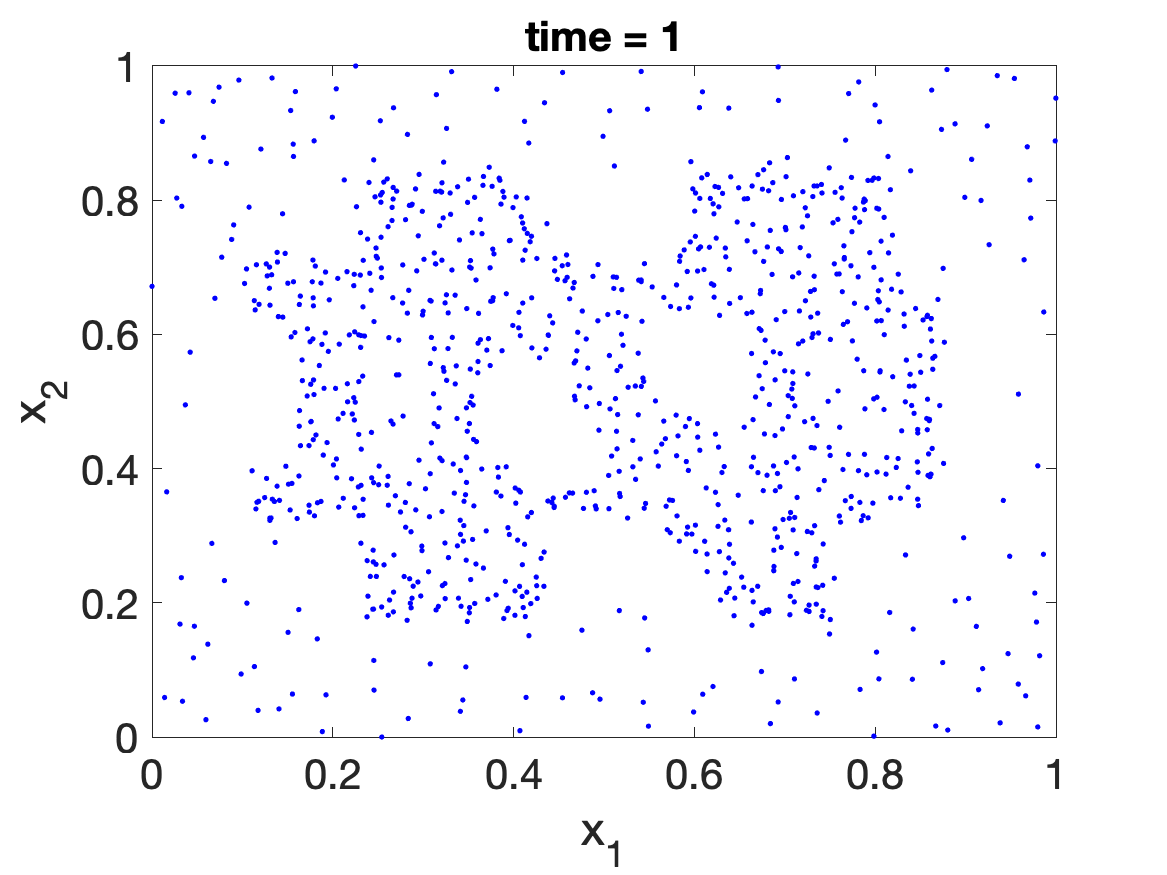}
    \end{subfigure}
    \begin{subfigure}[b]{0.24\textwidth}
        \centering
        \includegraphics[width=\textwidth]{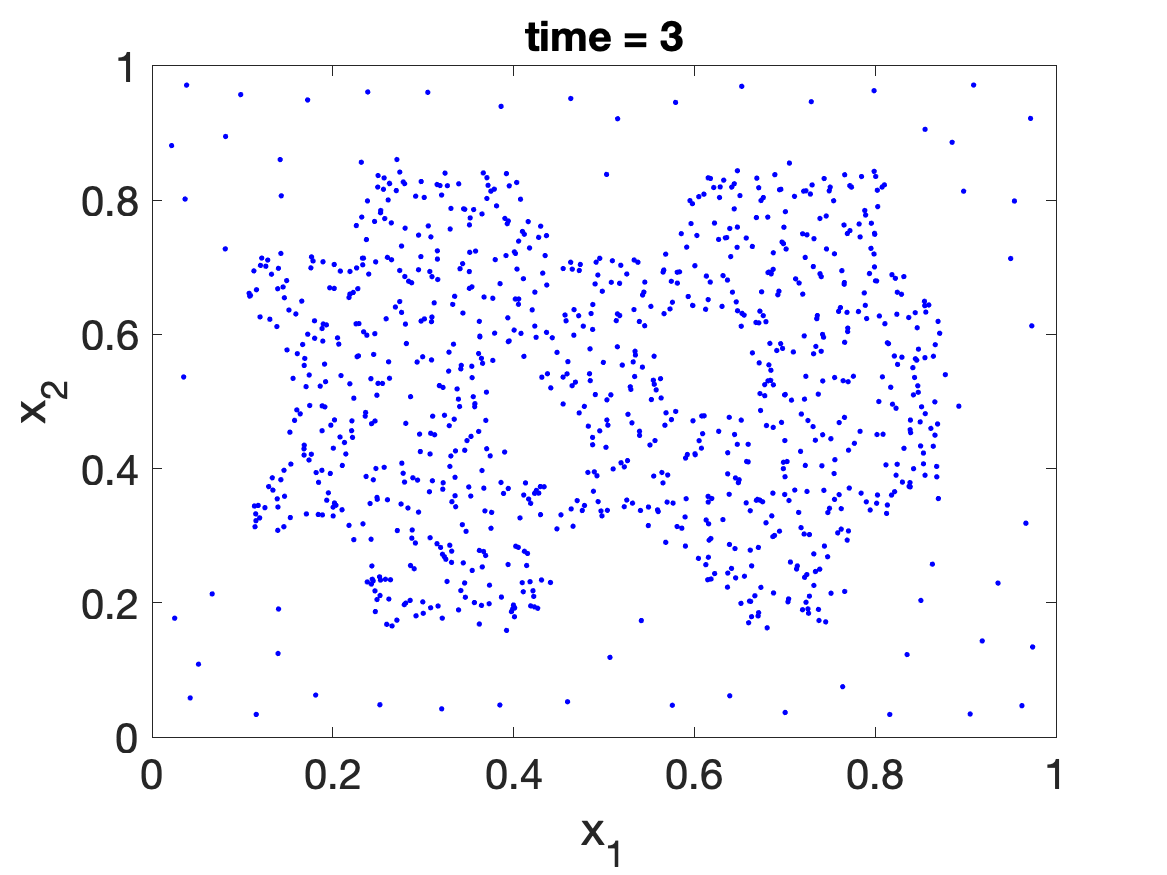}
    \end{subfigure}
    \begin{subfigure}[b]{0.24\textwidth}
        \centering
        \includegraphics[width=\textwidth]{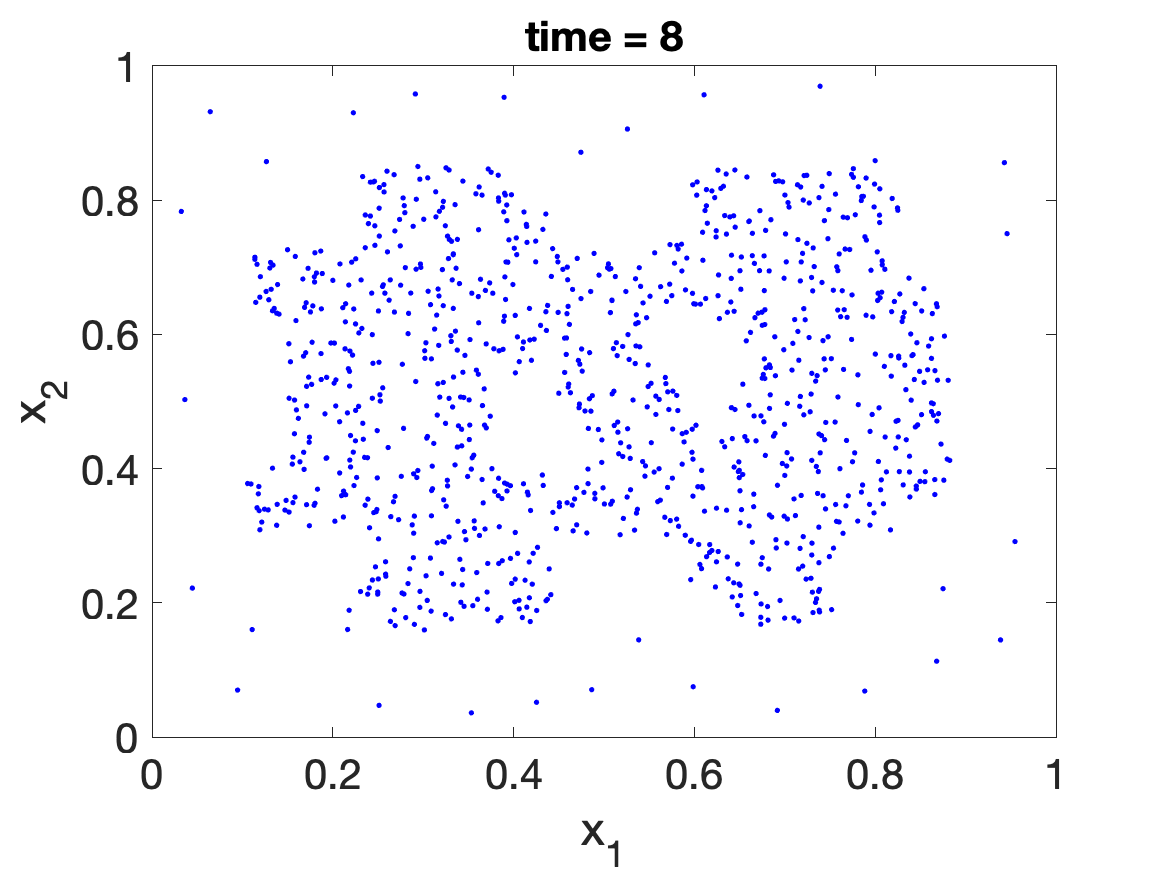}
    \end{subfigure}
    
    \begin{subfigure}[b]{0.24\textwidth}
        \centering
        \includegraphics[width=\textwidth]{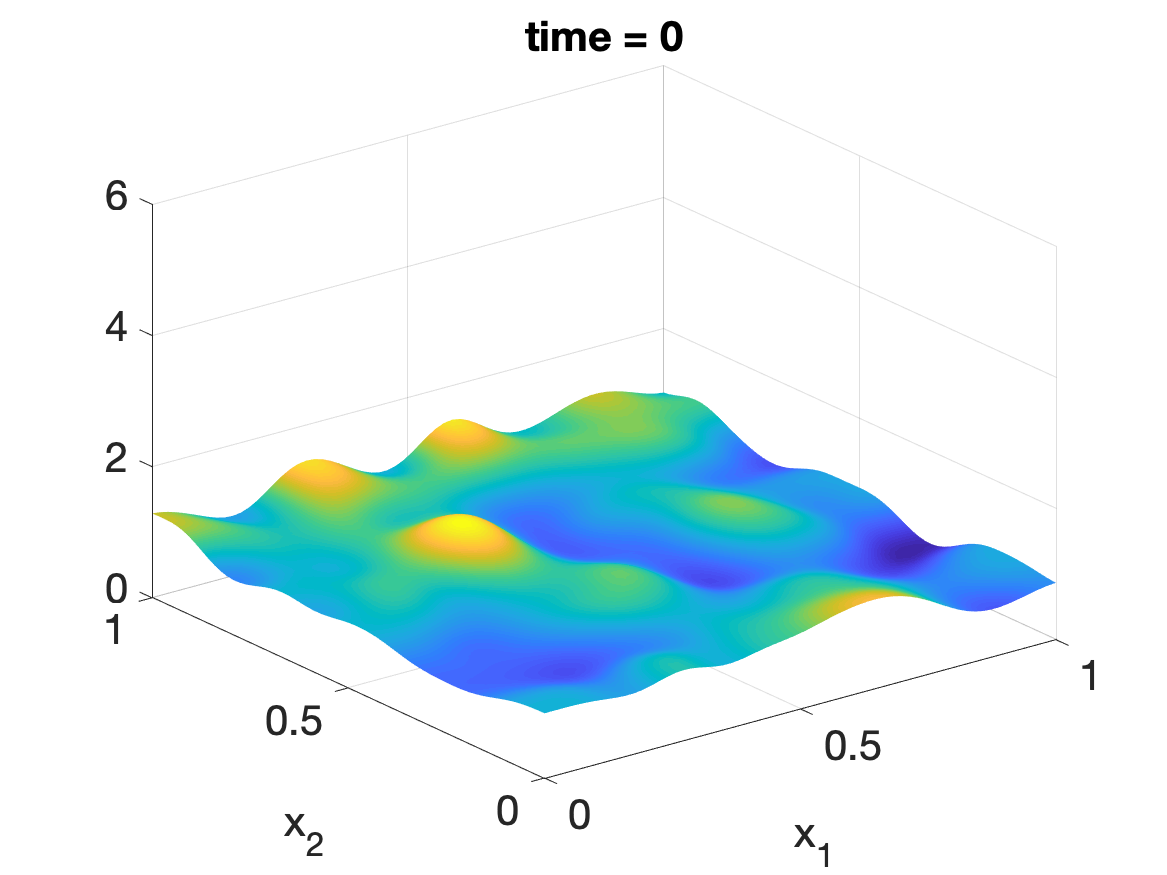}
    \end{subfigure}
    \begin{subfigure}[b]{0.24\textwidth}
        \centering
        \includegraphics[width=\textwidth]{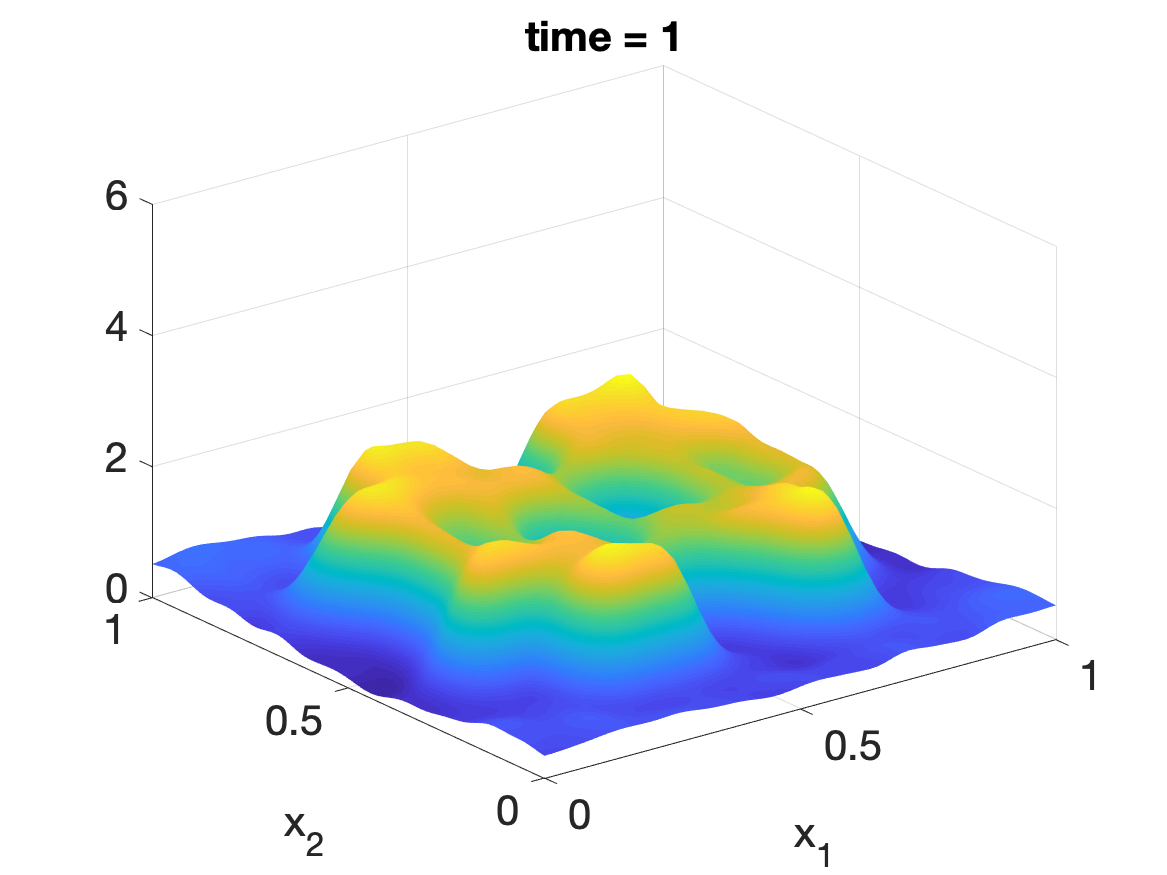}
    \end{subfigure}
    \begin{subfigure}[b]{0.24\textwidth}
        \centering
        \includegraphics[width=\textwidth]{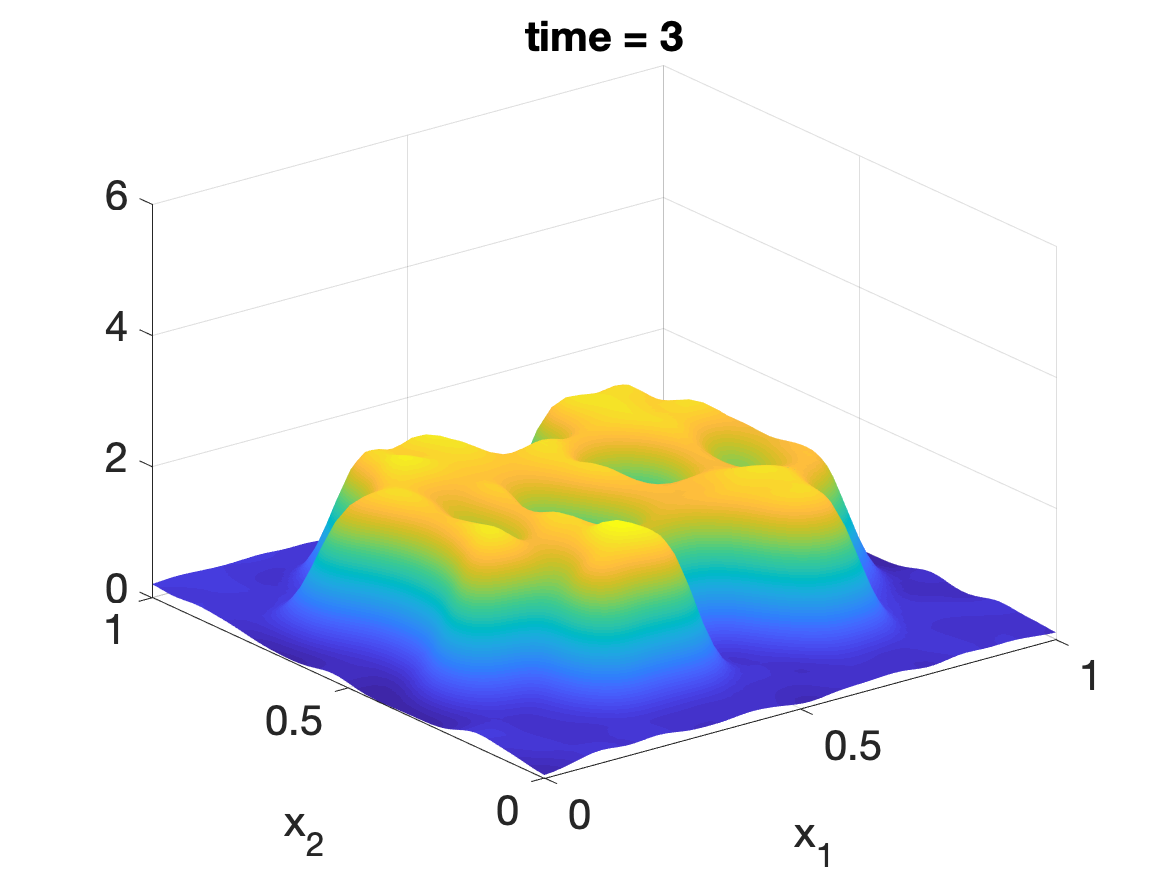}
    \end{subfigure}
    \begin{subfigure}[b]{0.24\textwidth}
        \centering
        \includegraphics[width=\textwidth]{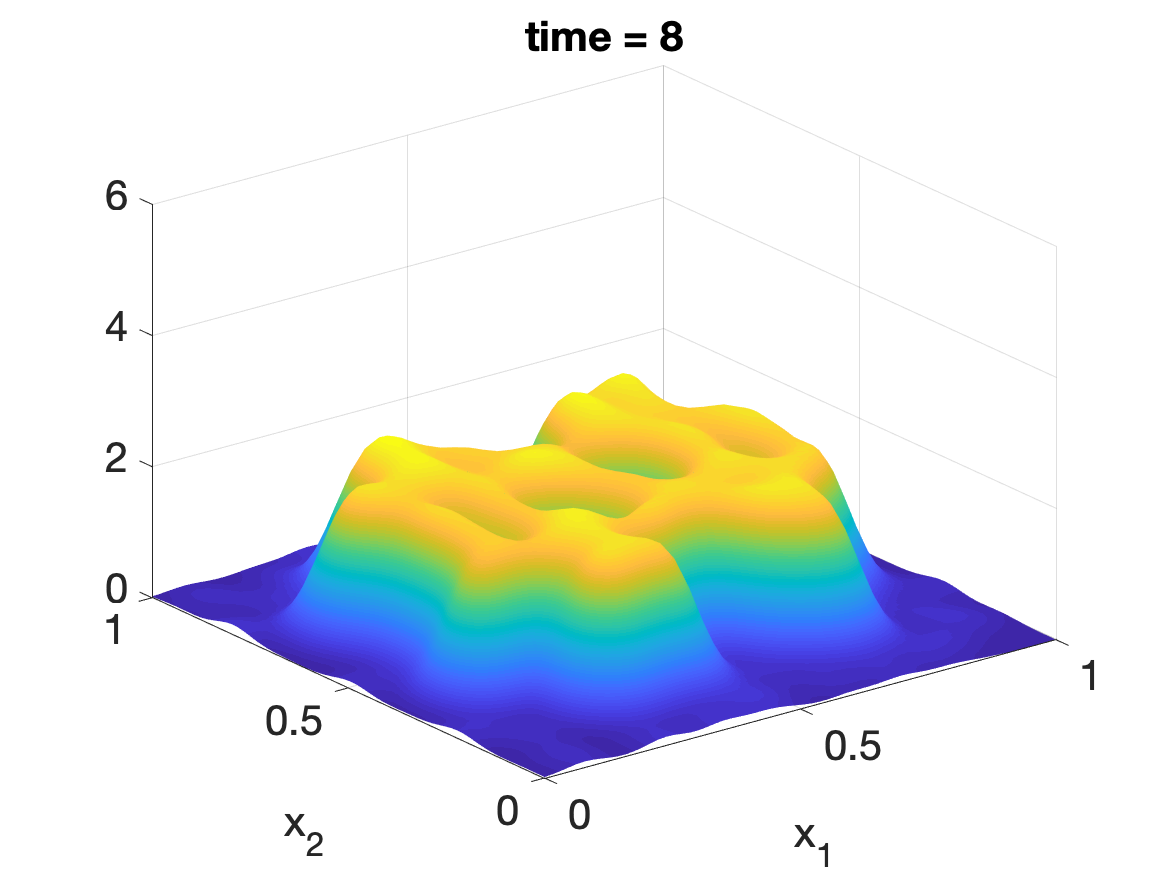}
    \end{subfigure}
    
    \begin{subfigure}[b]{0.24\textwidth}
        \centering
        \includegraphics[width=\textwidth]{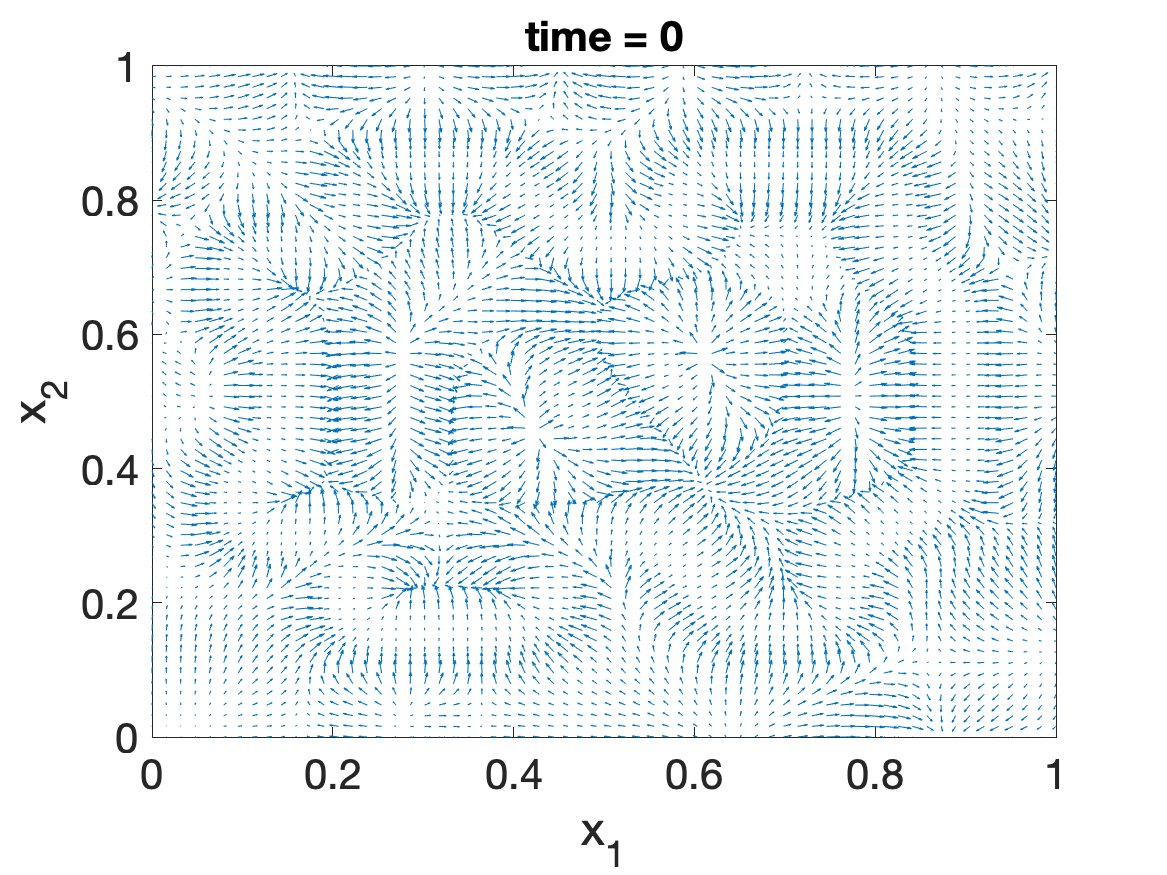}
    \end{subfigure}
    \begin{subfigure}[b]{0.24\textwidth}
        \centering
        \includegraphics[width=\textwidth]{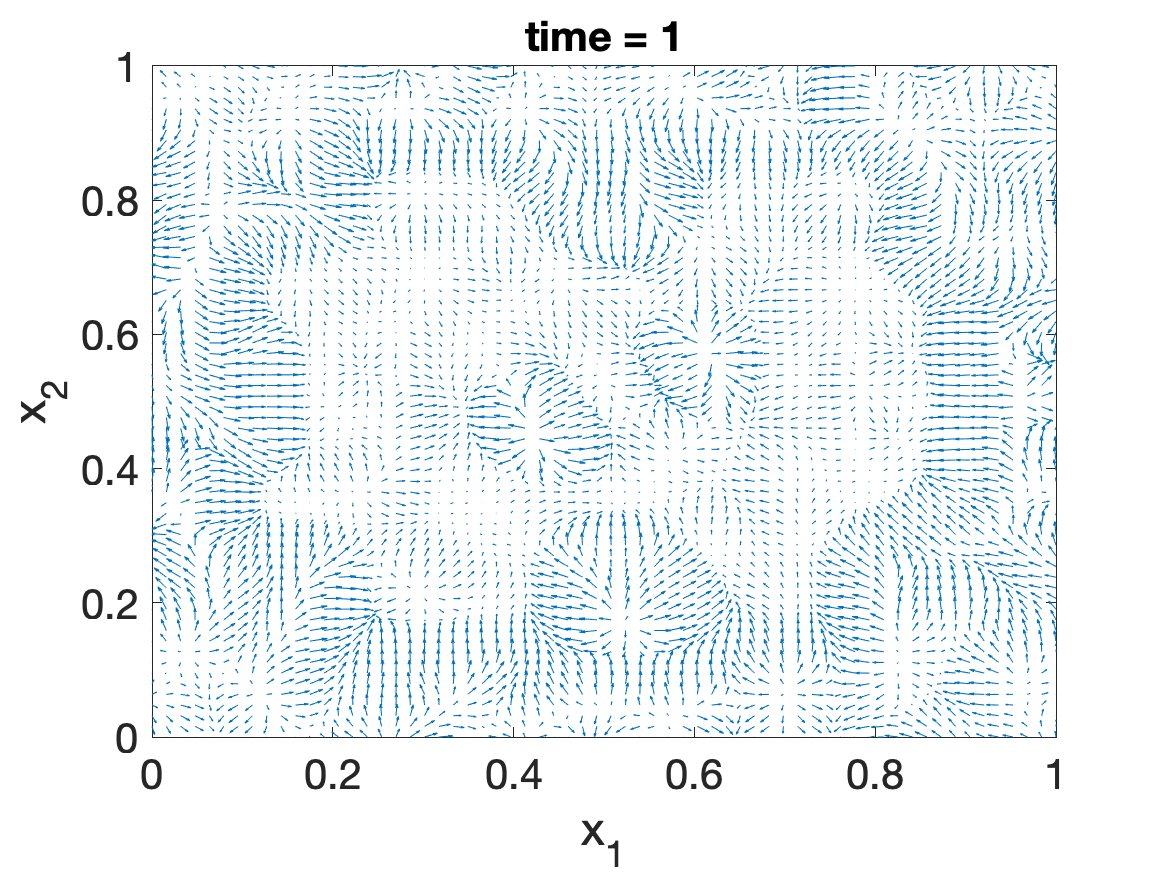}
    \end{subfigure}
    \begin{subfigure}[b]{0.24\textwidth}
        \centering
        \includegraphics[width=\textwidth]{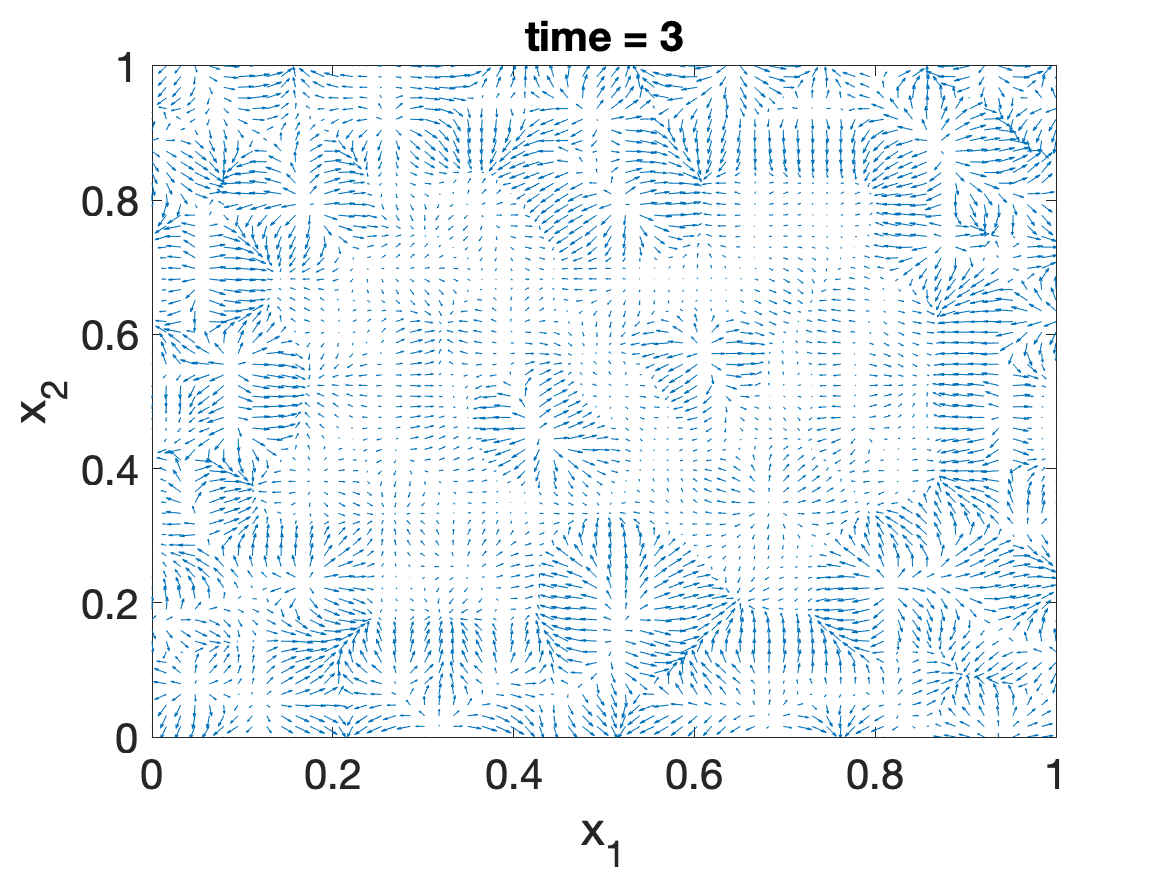}
    \end{subfigure}
    \begin{subfigure}[b]{0.24\textwidth}
        \centering
        \includegraphics[width=\textwidth]{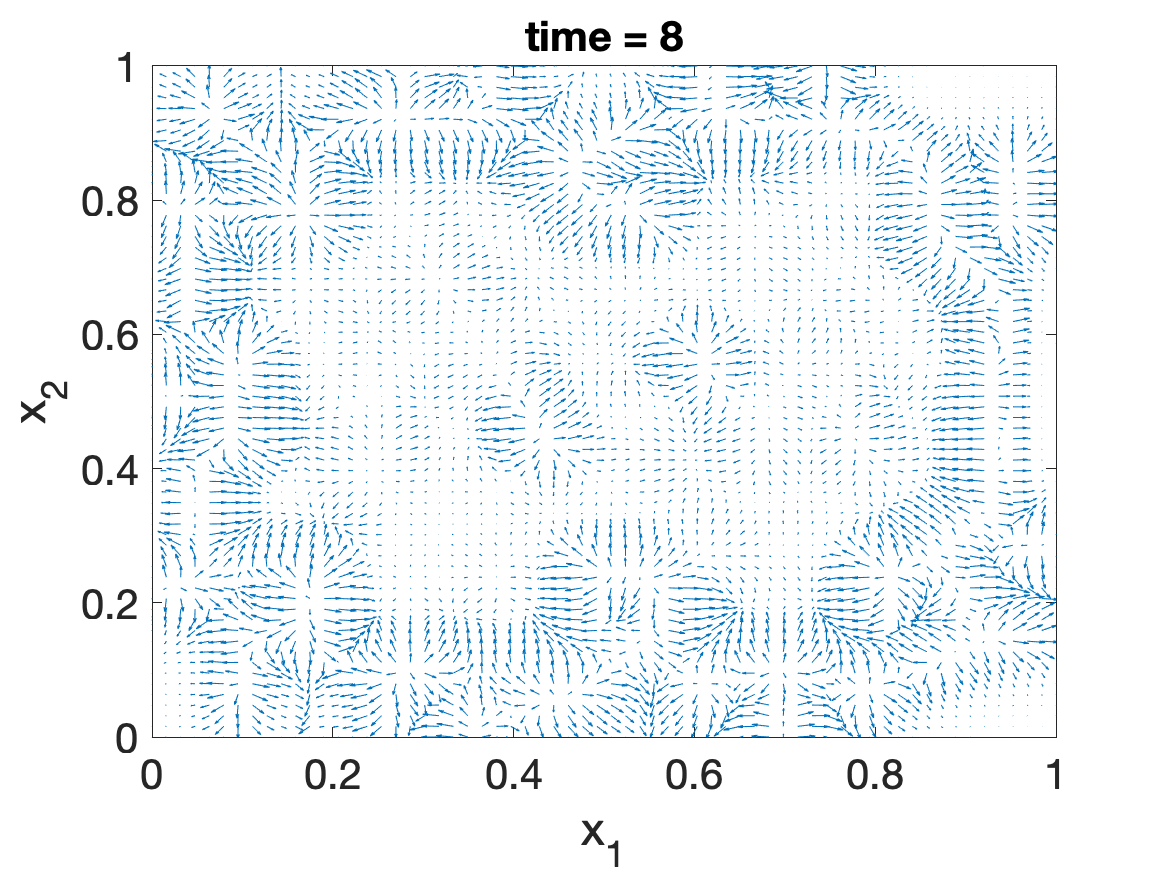}
    \end{subfigure}
    \caption{Evolution of the swarm (top), the density estimates $\Hat{p}(x,t)$ (middle) and the real-time velocity fields ${v}(x,t)$ (bottom). Magnitude of the velocity fields is rescaled for illustrative purpose.}
    \label{fig:pdf evolution}
\end{figure*}

Fig. \ref{fig:pdf evolution} demonstrates the positions of the robots $\{X_i(t)\}_{i=1}^N$, the estimated density $\Hat{p}(x,t)$ of the swarm, and the velocity field ${v}(x,t)$ generated by \eqref{eq:density feedback law using estimation}, which suggests that the swarm is able to evolve towards the desired configuration. The convergence error $\|\Hat{p}-p^*\|_{L^2(\omega)}$ is given in Fig. \ref{fig:convergence error}, which shows that the error converges exponentially to a small neighbourhood around $0$ and remains bounded, which verifies the ISS property of the proposed algorithm.

\section{Conclusions} \label{section:conclusion}
This paper studied controlling the density of a swarm of robots using velocity fields that are computed in a feedback manner. The resulting closed-loop system was proven to be LISS with respect to density estimation errors.
The presented framework filled the gap between local kinematics of individual robots and their emergent behaviors in swarm robotic systems. It was top-down and computationally efficient.
With the feedback technique, the global performance was guaranteed to be convergent and robust to estimation errors when performing in real-time. 
Our future work includes studying the distributed density estimation problem and considering more general robotic dynamics.


\section*{Appendices}
{\color{black}
\subsection{Conormal derivative problems}\label{section:conormal derivative problems}
Equation \eqref{eq:FP equation} is a special case of the so-called \textit{conormal derivative problem} for parabolic equations of divergent form \cite{lieberman1996second}. We summarize (and modify appropriately) main results from Chapter VI \cite{lieberman1996second}.

We use the same notations as in Section \ref{section:notation}. We follow the summation convention that any term with a repeated index $i$ is summed over $i=1$ to $n$. For example, $b_{i} \partial_{i} u=\sum_{i=1}^{n} b_{i} \partial_{i} u$. For bounded functions $a_{ij}$, $b_i$, $c_i$, and $c_0$ in $\omega$, define the operator
$$
Lu:=-\partial_tu+\partial_i(a_{ij}\partial_ju+b_iu)+c_i\partial_iu+c_0u.
$$
We always assume that $\{a_{ij}\}$ is uniformly elliptic, i.e., for some positive constant $\lambda$, 
$$
a_{ij}(x,t)\xi_i\xi_j\ge\lambda|\xi|^2\quad\text{for any $(x,t)\in\Omega$ and any $\xi\in\mathbb R^n$}.
$$
For a bounded function $b_0$ on $S(\Omega)$, define the operator 
$$
Mu:=(a_{ij}\partial_ju+b_iu-f_i)\nu_i - b_0u\quad\text{on }S(\Omega),
$$
where $\nu=(\nu_1, \cdots, \nu_n)$ is the unit inner normal to the boundary.

For given functions $f_i$ and $g$ on $\Omega$, $\varphi$ on $\omega$, and 
$\psi$ on $S(\Omega)$, the {\it conormal derivative problem} has the following form: 
\begin{align} \label{eq:conormal problem}
\begin{split}
Lu&=\partial_if_i+g\quad\text{in }\Omega,\\
u&=\varphi\quad\text{on }{\color{black}\omega\times\{0\}},\\
Mu&=\psi\quad\text{on }S(\Omega).
\end{split}
\end{align}

In this paper, we only need the case $f_i=g=0$ in $\Omega$ and $b_0=\psi=0$ on $S(\Omega)$. We present the general form for completeness.
Take any test function $\eta\in C^1(\overline{\Omega})$. 
Multiplying the first equation of \eqref{eq:conormal problem} by $-\eta$ and integrating by parts, we obtain 
\begin{align} \label{eq:weak solution}
\begin{split}
&\int_{{\color{black}\omega}}u \eta dx-\int_{\Omega}u\partial_t\eta dxd\tau\\
&\quad+\int_{\Omega}(a_{ij}\partial_ju+b_iu-f_i)\partial_i\eta-(c_i\partial_iu+c_0u-g)\eta dxd\tau\\
&\quad=\int_{S(\Omega)}(b_0u+\psi)\eta ds d\tau+\int_{{\color{black}\omega}}\varphi \eta dx,
\end{split}
\end{align}
where $ds$ is the area form of the boundary $\partial\omega$.

In the following, we always assume $a_{ij}, b_i, c_i, c_0\in L^\infty(\Omega)$, and $b_0\in L^\infty(S(\Omega))$.
We also consider given $f_i, g\in L^2(\Omega)$, $\varphi\in L^2(\omega)$, and $\psi\in L^2(S(\Omega))$. 
For convenience, we write $f=(f_1, \cdots, f_n)$. 

\begin{definition} \label{dfn:weak solution}
(Weak solution \cite{lieberman1996second}). A function $u\in \mathcal{M}$ is a {\it weak solution} of the initial/boundar-value problem \eqref{eq:conormal problem} if it satisfies \eqref{eq:weak solution} for any $\eta\in H^1(\Omega)$ and almost every $t\in(0,T]$. Similarly, a function $u\in\mathcal{M}$ is a {\it weak subsolution (supersolution)} of the problem \eqref{eq:conormal problem} if the inequality $\leq (\geq)$ holds in \eqref{eq:weak solution} instead of the equality $=$, 
for any $\eta\in H^1(\Omega)$ with $\eta\geq 0$ and almost every $t\in(0,T]$. 
\end{definition}


We note that a weak solution is simultaneously a weak subsolution and a weak supersolution.
We now discuss the well-posedness and some properties of the weak solution.
The following result is based on Theorem 6.38 and Theorem 6.39 in \cite{lieberman1996second}.
\begin{theorem}\label{thm:well-posedness of conormal}
(\textbf{Well-posedness} \cite{lieberman1996second}). Assume $f_i, g\in L^2(\Omega)$, $\varphi\in L^2(\omega)$, and $\psi\in L^2(S(\Omega))$. 
Then, there exists a unique weak solution $u\in \mathcal{M}$ of the problem \eqref{eq:conormal problem}, which satisfies
$$
\|u\|_\mathcal{M}\leq Ce^{CT}\{\|f\|_{L^2(\Omega)}+\|g\|_{L^2(\Omega)}+\|\varphi\|_{L^2(\omega)}+\|\psi\|_{L^2(S(\Omega))}\},
$$ 
where $C$ is a positive constant depending only on $n$, $\lambda$, $\omega$, and the $L^\infty$-norms of $a_{ij}$, $b_i$, $c_i$, $c_0$, and $b_0$.
\end{theorem}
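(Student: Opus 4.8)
The plan is to prove existence, uniqueness, and the quantitative $\mathcal{M}$-bound for the weak formulation \eqref{eq:weak solution} by the \emph{Galerkin method} combined with a \emph{Grönwall energy estimate}, which is the standard route for linear divergence-form parabolic problems. Fix a countable basis $\{w_k\}_{k\geq1}$ of $H^1(\omega)$ that is orthonormal in $L^2(\omega)$ (e.g.\ the Neumann eigenfunctions of the Laplacian on $\omega$). Let $B[\cdot,\cdot;t]$ and $F(t)$ denote the spatial bilinear form and load functional read off from \eqref{eq:weak solution}, namely
$$
B[u,v;t]=\int_\omega\!\big(a_{ij}\partial_ju\,\partial_iv+b_iu\,\partial_iv-c_i\partial_iu\,v-c_0uv\big)dx-\int_{\partial\omega}\!b_0uv\,ds,
$$
$$
\langle F(t),v\rangle=\int_\omega\!\big(f_i\partial_iv+gv\big)dx+\int_{\partial\omega}\!\psi v\,ds,
$$
so that the formulation is equivalent to $\tfrac{d}{dt}(u,\eta)_{L^2(\omega)}+B[u,\eta;t]=\langle F(t),\eta\rangle$ for time-independent $\eta$. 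For each $m$ I seek $u_m(x,t)=\sum_{k=1}^m d_k^m(t)w_k(x)$ solving the projected identity against $w_1,\dots,w_m$ with $d_k^m(0)=(\varphi,w_k)$. This is a linear first-order ODE system for $d^m=(d_1^m,\dots,d_m^m)$ with coefficients that are $L^\infty$/$L^2$ in $t$, so Carath\'eodory's theorem furnishes a unique absolutely continuous solution on $[0,T]$.

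Second, I derive the energy estimate, which is the heart of the argument. Taking $\eta=u_m$ (i.e.\ multiplying the $k$-th equation by $d_k^m$ and summing) gives
$$
\tfrac12\tfrac{d}{dt}\|u_m\|_{L^2(\omega)}^2+B[u_m,u_m;t]=\langle F(t),u_m\rangle.
$$
Uniform ellipticity $a_{ij}\xi_i\xi_j\geq\lambda|\xi|^2$ bounds the principal part below by $\lambda\|\nabla u_m\|_{L^2(\omega)}^2$; the lower-order volume terms are absorbed via Young's inequality ($\varepsilon\|\nabla u_m\|^2+C_\varepsilon\|u_m\|^2$), and the $b_0$ boundary term via the trace inequality $\|u_m\|_{L^2(\partial\omega)}^2\leq\delta\|\nabla u_m\|_{L^2(\omega)}^2+C_\delta\|u_m\|_{L^2(\omega)}^2$. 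This is a G\aa rding-type inequality; choosing $\varepsilon,\delta$ small yields
$$
\tfrac{d}{dt}\|u_m\|_{L^2(\omega)}^2+\lambda\|\nabla u_m\|_{L^2(\omega)}^2\leq C\|u_m\|_{L^2(\omega)}^2+C\big(\|f\|_{L^2(\omega)}^2+\|g\|_{L^2(\omega)}^2+\|\psi\|_{L^2(\partial\omega)}^2\big),
$$
with $C$ depending only on $n,\lambda,\omega$ and the $L^\infty$-norms of the coefficients. Grönwall's inequality (whence the $Ce^{CT}$ factor) together with $\|u_m(0)\|_{L^2(\omega)}\leq\|\varphi\|_{L^2(\omega)}$ then bounds $\sup_t\|u_m(t)\|_{L^2(\omega)}^2$ and $\int_0^T\|\nabla u_m\|_{L^2(\omega)}^2\,dt$ by the claimed right-hand side, i.e.\ $\{u_m\}$ is bounded in $\mathcal{M}$ uniformly in $m$.

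Third comes passage to the limit and uniqueness. The uniform $\mathcal{M}$-bound yields a subsequence with $u_m\rightharpoonup u$ weakly in $L^2(0,T;H^1(\omega))$ and weakly-$*$ in $L^\infty(0,T;L^2(\omega))$. Testing the Galerkin identity against $\eta(x,t)=\phi(t)w_k(x)$ with $\phi\in C^1([0,T])$, $\phi(T)=0$, integrating the time-derivative term by parts onto $\phi$ (which is where $\varphi$ enters), and exploiting linearity in $u$, I pass to the weak limit in every term; density of such $\eta$ in $H^1(\Omega)$ shows $u\in\mathcal{M}$ satisfies \eqref{eq:weak solution}, and weak lower semicontinuity of the norm transfers the a-priori bound to $u$. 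For uniqueness, the difference $w$ of two solutions with identical data is a weak solution of \eqref{eq:conormal problem} with $f=g=\psi=\varphi=0$; testing with $w$ itself gives $\tfrac{d}{dt}\|w\|_{L^2(\omega)}^2\leq C\|w\|_{L^2(\omega)}^2$ with $w(0)=0$, so Grönwall forces $w\equiv0$.

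The main obstacle is the admissibility of testing with the solution itself, both in the energy estimate and in uniqueness: elements of $\mathcal{M}$ carry only the time regularity $\partial_t u\in L^2(0,T;(H^1)^*)$, so one cannot naively insert $u$ into \eqref{eq:weak solution}. The Galerkin scheme circumvents this for existence because each $u_m$ is smooth in $t$, while for uniqueness the needed integration-by-parts-in-time is justified by a Steklov time-average (equivalently the Lions--Magenes formula for the duality pairing $\langle\partial_t w,w\rangle=\tfrac12\tfrac{d}{dt}\|w\|_{L^2(\omega)}^2$). The remaining care is bookkeeping: controlling the boundary contribution through the trace inequality and tracking the Young- and G\aa rding-constants so that the final constant depends only on $n,\lambda,\omega$ and the stated $L^\infty$-norms.
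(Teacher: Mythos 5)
Your proposal is sound, but it cannot be matched line-by-line against the paper's argument for a simple reason: the paper does not prove this theorem at all. The result is imported directly from Lieberman's book --- the text says it ``is based on Theorem 6.38 and Theorem 6.39 in \cite{lieberman1996second}'' and the theorem header itself carries the citation --- so the authors' entire ``proof'' is a reference. What you have written is the standard self-contained route: Galerkin approximation in an $L^2(\omega)$-orthonormal basis of $H^1(\omega)$, Carath\'eodory's theorem for the projected linear ODE system, a G\aa rding-type energy estimate (ellipticity plus Young and the interpolated trace inequality for the $b_0$ and $\psi$ boundary terms) closed by Gr\"onwall --- which is exactly where the $Ce^{CT}$ factor and the stated dependence of $C$ come from --- then weak-limit passage by linearity, and uniqueness via testing the difference with itself, justified by Steklov averaging or the Lions--Magenes identity $\langle\partial_t w,w\rangle=\tfrac12\tfrac{d}{dt}\|w\|_{L^2(\omega)}^2$. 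This is of the same energy-method family as Lieberman's own treatment (which also leans on Steklov averages; compare the paper's remark after \eqref{eq:energy identity} that testing with $\eta=u\in\mathcal{M}$ requires an approximation argument), so your proof is a legitimate replacement for the citation, and its benefit is self-containedness. Two bookkeeping points would be needed to land on the statement exactly as written: first, the $\mathcal{M}$-norm involves $\sup_{t\in[0,T]}\|u(\cdot,t)\|_{L^2(\omega)}$, a genuine supremum over all $t$, while the Galerkin limit a priori controls only an essential supremum; the standard fix is that $u\in L^\infty(0,T;L^2(\omega))$ together with $\partial_t u\in L^2(0,T;(H^1(\omega))^*)$ gives weak $L^2(\omega)$-continuity in $t$, so $u(\cdot,t)$ is defined for every $t$ with the same bound. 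Second, the paper's Definition \ref{dfn:weak solution} demands the identity \eqref{eq:weak solution}, which retains the term $\int_\omega u\eta\,dx$ at the running time $t$ and admits time-dependent $\eta\in H^1(\Omega)$; after integrating the time derivative onto $\phi$ in your tensor-product test functions $\phi(t)w_k(x)$ you must run the density argument (their span is dense in $H^1(\Omega)$) and undo the integration by parts to recover precisely that form for a.e.\ $t$. Neither point is a substantive gap --- they are the final steps needed to convert your estimate and limit object into the paper's exact formulation.
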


We have the following \textit{energy identity} for the weak solution $u\in \mathcal{M}$: for almost every $t\in(0,T]$,
\begin{align}
&\frac{1}{2}\int_{{\color{black}\omega}}u^2dx \nonumber\\
&\quad+\int_\Omega\big[(a_{ij}\partial_ju+b_iu-f_i)\partial_iu-(c_i\partial_iu+c_0u-g)u\big]dxd\tau \nonumber\\
&\quad=\int_{S(\Omega)}(b_0u+\psi)udsd\tau+\frac12\int_{{\color{black}\omega}}\varphi^2dx. \label{eq:energy identity}
\end{align}
The proof is by an approximation argument, i.e., take $\eta=u\in \mathcal{M}$ and show it is the limit in $\mathcal{M}$ of a sequence of $H^1$ functions \cite{lieberman1996second}. 

From now on, we assume $\omega$ is a connected domain. The following result is based on Theorem 6.43 in \cite{lieberman1996second}, which is for subsolutions.
\begin{theorem}\label{thm:strong maximum priciple}
(\textbf{Strong maximum principle} \cite{lieberman1996second}). Assume $f_i=g=0$ in $\Omega$, $\psi=0$ and $b_0\leq 0$ on $S(\Omega)$, $\varphi\in L^\infty(\omega)$, and, for any $v\in C^1(\Omega)$ with $v\geq 0$,
\begin{equation}\label{eq:sign coefficient}
    \int_{\Omega}(-b_i\partial_iv+c_0v)dxdt\leq 0.
\end{equation}
Let $u\in \mathcal{M}$ be a weak subsolution of the problem \eqref{eq:conormal problem}. Then, 
$$
u\geq-\sup_{\omega}\varphi^-\quad\text{in }{\color{black}\omega\times(0,T]}.
$$
Moreover, $u$ is constant if the equality holds at some $(x,t)\in {\color{black}\omega\times(0,T]}$. 

\end{theorem}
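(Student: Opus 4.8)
The plan is to split the statement into two halves: a weak (lower-bound) maximum principle established by an energy estimate, and the strong (rigidity) assertion established via a Harnack inequality. Throughout I would set $k:=\sup_\omega\varphi^-\ge 0$ and $w:=u+k$, so that the desired bound $u\ge-\sup_\omega\varphi^-$ is equivalent to $w\ge 0$ in $\omega\times(0,T]$. On the initial slice, $w(\cdot,0)=\varphi+k\ge 0$ by the definition of $k$, hence $w^-(\cdot,0)=0$, where $w^-:=\max\{-w,0\}$.

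For the lower bound I would run the standard energy method with the nonnegative test function $\eta=w^-\in\mathcal M$. Since $w\in\mathcal M$, its truncation $w^-$ is admissible after the same approximation by $H^1$ functions used to derive the energy identity \eqref{eq:energy identity}; substituting it into the subsolution inequality \eqref{eq:weak solution} (with $f_i=g=0$ in $\Omega$ and $\psi=0$ on $S(\Omega)$) yields, for almost every $t$,
\begin{equation*}
\tfrac12\int_\omega (w^-)^2(\cdot,t)\,dx+\int_0^t\!\!\int_\omega a_{ij}\partial_j w^-\,\partial_i w^-\,dx\,d\tau\le(\text{lower-order and boundary terms}).
\end{equation*}
On the set $\{w<0\}$ one has $\partial_i w^-=-\partial_i u$ and $u=-w^--k$, so uniform ellipticity bounds the quadratic term below by $\lambda\int_0^t\|\nabla w^-\|_{L^2(\omega)}^2\,d\tau$, the first- and zeroth-order contributions from $b_i$ and $c_0$ are controlled precisely by hypothesis \eqref{eq:sign coefficient} applied to the nonnegative functions $w^-$ and $\tfrac12(w^-)^2$ (using $k\ge 0$), and the boundary contribution has a favorable sign because $b_0\le 0$ and $w^-\ge 0$. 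After absorbing the $c_i\partial_i$ cross term into the ellipticity term by Young's inequality, I obtain a differential inequality $\frac{d}{dt}\|w^-(\cdot,t)\|_{L^2(\omega)}^2\le C\|w^-(\cdot,t)\|_{L^2(\omega)}^2$ with zero initial data; Gr\"onwall then forces $w^-\equiv 0$, i.e. $w\ge 0$, which is exactly the claimed bound.

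For the rigidity statement, suppose equality holds at some $(x_0,t_0)\in\omega\times(0,T]$, so the nonnegative function $w=u+k$ vanishes at $(x_0,t_0)$. Because $w$ is a nonnegative weak subsolution in the interior, I would invoke the weak Harnack inequality from the De Giorgi--Nash--Moser theory underlying Chapter VI of \cite{lieberman1996second}: over a backward parabolic cylinder compactly contained in $\Omega$ and lying below $(x_0,t_0)$, a suitable $L^p$-average of $w$ on an earlier cylinder is dominated by a constant times $\inf w$ near $(x_0,t_0)$, which is $0$. Hence $w\equiv 0$ on that earlier cylinder; chaining such cylinders and using the connectedness of $\omega$ together with the conormal (no-flux) boundary condition propagates the vanishing to all of $\omega\times(0,t_0]$, so that $u\equiv-k$ is constant there.

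The main obstacle is the strong/rigidity half. The energy estimate is routine once $w^-$ is justified as an admissible test function, and hypothesis \eqref{eq:sign coefficient} is tailored precisely to make the lower-order terms cooperate. The delicate points are (i) establishing, or quoting, a Harnack inequality valid for merely $\mathcal M$-regular weak solutions up to the conormal boundary, and (ii) carrying out the covering and time-propagation argument that upgrades local vanishing to global constancy; both rely on the quantitative local estimates of \cite{lieberman1996second} rather than on the elementary energy method used for the lower bound.
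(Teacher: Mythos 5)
The paper contains no proof of Theorem \ref{thm:strong maximum priciple}: it is imported, with adapted notation, from Theorem 6.43 of \cite{lieberman1996second}, and the only argument in the appendix is the short derivation of Corollary \ref{corollary:positivity} \emph{from} it. So your attempt is being compared against a citation, not a proof. Your two-step architecture --- a truncation/energy (Gr\"onwall) estimate for the bound $u\ge-\sup_\omega\varphi^-$, then a weak Harnack inequality plus a chaining argument for the rigidity assertion --- is the standard De Giorgi--Nash--Moser route and is essentially what the cited reference does. You are also honest about its limits: the hard half (a weak Harnack inequality valid for merely $\mathcal M$-regular solutions up to the conormal boundary, and the H\"older-continuous representative needed even to make pointwise sense of ``equality holds at some $(x,t)$'') is quoted from the very book the paper cites, so what you have is a correct outline of the textbook proof rather than a self-contained replacement for the citation. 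Your treatment of the lower-order terms via \eqref{eq:sign coefficient} (applied to $w^-$ and $(w^-)^2$, with the leftover zero-order piece absorbed by Gr\"onwall) and your backward-in-time-only propagation of constancy to $\omega\times(0,t_0]$ are both correct.

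The one concrete step that fails as written is the direction of your energy inequality. With the paper's literal conventions --- \eqref{eq:weak solution} is obtained by multiplying by $-\eta$, and ``subsolution'' means ``$\le$'' in Definition \ref{dfn:weak solution} --- testing with $\eta=w^-$ produces $-\tfrac12\int_\omega(w^-)^2(\cdot,t)\,dx-\int_0^t\!\int_\omega a_{ij}\partial_jw^-\,\partial_iw^-\,dx\,d\tau$ plus lower-order terms on the \emph{left}, so the subsolution inequality bounds this energy from below, not above, and Gr\"onwall yields nothing. The estimate you wrote down is the one that follows from the \emph{supersolution} inequality. This mismatch is inherited from the paper rather than invented by you: taken literally, the statement is false --- for the Neumann heat equation, $u(x,t)=-t$ lies in $\mathcal M$, satisfies ``$\le$'' in \eqref{eq:weak solution} with $\varphi=0$, yet violates $u\ge-\sup_\omega\varphi^-=0$ --- and it becomes the true theorem of \cite{lieberman1996second} precisely when ``subsolution'' is read with the opposite sign convention, which is also the convention under which both your energy step and the weak Harnack inequality (a statement about nonnegative supersolutions) apply. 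A complete write-up must resolve this sign issue explicitly rather than silently writing the inequality in the direction that closes.
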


If $\partial_ib_i\in L^\infty(\Omega)$, the condition \eqref{eq:sign coefficient} can be 
substituted by its pointwise form $\partial_ib_i+c_0\leq 0$ in $\Omega$, and is not needed if we compare $u$ with 0.
Specifically, we have the following positivity result.

\begin{corollary}\label{corollary:positivity}
(\textbf{Positivity}). Assume  
$f_i=g=0$ in $\Omega$, $\psi=0$ and $b_0\leq 0$ on $S(\Omega)$, $\partial_ib_i\in L^\infty(\Omega)$, and $\varphi\in L^\infty(\omega)$. 
Let $u\in \mathcal{M}$ be a weak subsolution of the problem \eqref{eq:conormal problem}. 
If $ \varphi\geq(\text{or}>)0$ on $\omega$, then 
$$
u\geq(\text{or}>)0\quad\text{in }{\color{black}\omega\times(0,T]}.
$$
Moreover, $u$ is constant if the equality holds at some $(x,t)\in{\color{black}\omega\times(0,T]}$.
\end{corollary}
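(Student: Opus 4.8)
The plan is to deduce the corollary directly from the strong maximum principle, Theorem \ref{thm:strong maximum priciple}, by comparing the subsolution $u$ against the zero function. Under the stated hypotheses $f_i=g=0$, $\psi=0$, $b_0\le 0$ and $\varphi\in L^\infty(\omega)$, the only ingredient of Theorem \ref{thm:strong maximum priciple} still to be supplied is the structural condition \eqref{eq:sign coefficient}. I would first observe that, because $\partial_ib_i\in L^\infty(\Omega)$, integrating by parts in the spatial variables rewrites $\int_\Omega(-b_i\partial_iv)\,dx\,dt$ as $\int_\Omega(\partial_ib_i)v\,dx\,dt$ up to a boundary contribution on $S(\Omega)$, so that \eqref{eq:sign coefficient} is implied by the pointwise inequality $\partial_ib_i+c_0\le 0$ in $\Omega$. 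Since no sign is assumed on $c_0$, this pointwise form need not hold as stated, and the heart of the argument is to arrange it by a rescaling.

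The key reduction is the exponential change of variable $u=e^{\lambda t}w$. A direct substitution into the operators $L$ and $M$ shows that $w$ solves the same type of conormal problem with $c_0$ replaced by $c_0-\lambda$ and with the same $a_{ij}$, $b_i$, $b_0$ and initial datum $\varphi$; in particular the weak-subsolution inequality \eqref{eq:weak solution} is preserved, because the multiplier $e^{\lambda t}$ depends only on $t$ and the weak formulation is linear in the test function. Choosing $\lambda\ge\|\partial_ib_i\|_{L^\infty(\Omega)}+\|c_0\|_{L^\infty(\Omega)}$ forces $\partial_ib_i+(c_0-\lambda)\le 0$, so Theorem \ref{thm:strong maximum priciple} applies to $w$. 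Because $\varphi\ge 0$ gives $\sup_\omega\varphi^-=0$, I conclude $w\ge 0$, and hence $u=e^{\lambda t}w\ge 0$ in $\omega\times(0,T]$. This is precisely the sense in which, as noted in the remark preceding the statement, condition \eqref{eq:sign coefficient} is ``not needed when comparing $u$ with $0$'': the shift absorbs the indefinite zeroth-order term while leaving the comparison value $0$ unchanged.

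For the strict statement and the rigidity claim I would invoke the second assertion of Theorem \ref{thm:strong maximum priciple}. If $u(x_0,t_0)=0$ at some $(x_0,t_0)\in\omega\times(0,T]$, then $w(x_0,t_0)=0$, so $w$ attains its lower bound $0$ at an interior point and must therefore be constant; being nonnegative and vanishing somewhere, $w\equiv 0$, whence $u\equiv 0$, which is the asserted constancy and forces $\varphi\equiv 0$. In particular, under the strict hypothesis $\varphi>0$ this would contradict the initial datum, so $u>0$ throughout $\omega\times(0,T]$. The main obstacle I anticipate is the careful justification that the exponential rescaling genuinely preserves the weak-subsolution inequality together with the conormal boundary condition encoded by $M$; equivalently, one could bypass the shift and argue directly by testing the weak inequality with the negative part $u^-$ and applying Gronwall's inequality, in which case the constants $\|\partial_ib_i\|_{L^\infty(\Omega)}$ and $\|c_0\|_{L^\infty(\Omega)}$ reappear through the same mechanism. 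In either route, the delicate point is controlling the boundary term generated by the flux $b_iu$ when integrating by parts, which is exactly where the assumption $b_0\le 0$ on $S(\Omega)$ must be used.
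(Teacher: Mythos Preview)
Your proposal is correct and follows essentially the same route as the paper: both use the exponential rescaling $u=e^{\lambda t}w$ to shift $c_0$ to $c_0-\lambda$, choose $\lambda$ large enough that $\partial_ib_i+(c_0-\lambda)\le 0$ pointwise, and then invoke Theorem~\ref{thm:strong maximum priciple} for $w$ to conclude $w\ge 0$ (hence $u\ge 0$). Your write-up is in fact more complete than the paper's, as you spell out the strict-inequality and rigidity cases explicitly and flag the only technical care point (that the rescaling preserves the weak-subsolution inequality and conormal boundary condition), which the paper leaves implicit.
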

\begin{proof}
Consider $u=e^{\mu t}w$. Then, $w$ is a weak solution of the equation 
$(L-\mu)w=0$. The coefficient of the zero-order term is given by $c_0-\mu$. 
By taking $\mu\geq \partial_ib_i+c_0$, the pointwise version of \eqref{eq:sign coefficient} holds for the operator $L-\mu$. 
We may apply Theorem \ref{thm:strong maximum priciple} to $w$ to conclude 
$w\geq(\text{or}>) 0$ in ${\color{black}\omega\times(0,T]}$ since $\varphi\geq(\text{or}>) 0$ on $\omega$. 
Hence, $u\geq(\text{or}>) 0$ in ${\color{black}\omega\times(0,T]}$.
\end{proof}
}
{\color{black}

\subsection{Proof of Theorem \ref{thm:(L)ISS-Lyapunov function}}
\begin{proof}
Let the control system $\Sigma=\left(X, U_{c}, \phi\right)$ possess an LISS-Lyapunov function and $\psi_{1}, \psi_{2}, \chi, W, \rho_{x}, \rho_{u}$ be as in Definition \ref{dfn:(L)ISS-Lyapunov function}. 
Take an arbitrary $u \in U_{c}$ with $\|u\|_{U_{c}} \leq \rho_{u}$ and fix it. 
Consider
\[
I_t=\left\{x \in D:\|x\|_{X} \leq \rho_{x}, V(t,x) \leq \psi_{2} \circ \chi\left(\|u\|_{U_{c}}\right) \leq \rho_{x}\right\}.
\]

First, we show that $I_t$ is invariant, that is: $\forall x \in I_t \Rightarrow x(t)=\phi(t,t_0,x, u) \in I_t, t \geq t_0$.
If $I_t$ is not invariant, then, due to continuity of $\phi$ w.r.t. $t$, $\exists t_{*}>0$, such that $V\left(t_*,x\left(t_{*}\right)\right)=\psi_{2} \circ \chi\left(\|u\|_{U_{c}}\right)$, and therefore $\left\|x\left(t_{*}\right)\right\|_{X} \geq\chi\left(\|u\|_{U_{c}}\right)$.
The input to the system $\Sigma$ after time $t^{*}$ is $\tilde{u}$, defined by $\tilde{u}(\tau)=u\left(\tau+t^{*}\right), \tau \geq 0$.
According to the assumption of the theorem, $\|\tilde{u}\|_{U_{c}} \leq\|u\|_{U_{c}}$. 
Then from \eqref{eq:LISS implication form 1} it follows that $\dot{V}_{\tilde{u}}\left(t_*,x\left(t_{*}\right)\right)=-W\left(\left\|x\left(t_{*}\right)\right\|_{X}\right)<0$. 
Thus, the trajectory cannot escape the set $I_t$.

Second, we show that any trajectory starting outside $I_t$ must enter $I_t$ in finite time.
Take arbitrary $x_{0}:\left\|x_{0}\right\|_{X} \leq \rho_{x}$, and let $x(t)=\phi(t,t_0,x_0,u)$ be the trajectory starting at $x_0$. 
As long as $x_{0} \notin I_t$, we have $\exists\psi\in\mathcal{L}$ (depending on $W$) such that:
\[
\dot{V}(t,x(t)) 
\leq-\psi(\|x(t)\|_{X})\leq-\psi \circ \psi_{2}^{-1}(V(t,x(t))),\quad t\geq t_0,
\]
where $\psi\circ\psi_{2}^{-1}\in\mathcal{K}$.
It follows that $\exists \tilde{\beta} \in \mathcal{KL}: V(t,x(t)) \leq \tilde{\beta}\left(V\left(t_0,x_{0}\right), t-t_0\right)$, and consequently:
\begin{equation}\label{eq:ISS first part}
    \|x(t)\|_{X} \leq \beta\left(\left\|x_{0}\right\|_{X}, t-t_0\right), \quad \forall t: x(t) \notin I_t,
\end{equation}
where $\beta(r, t):=\psi_{1}^{-1} \circ \tilde{\beta}\left(\psi_{2}^{-1}(r), t\right), \forall r, t \geq 0$.
From the properties of $\mathcal{K} \mathcal{L}$ functions, it follows that $\exists t_{1}$:
\[
t_{1}:=\inf _{t \geq t_0}\left\{x(t)=\phi\left(t,t_0, x_{0}, u\right) \in I_t\right\}.
\]
From the invariance of the set $I_t$ we conclude that
\begin{equation}\label{eq:ISS second part}
    \|x(t)\|_{X} \leq \gamma\left(\|u\|_{U_{c}}\right), \quad t>t_{1},
\end{equation}
where $\gamma=\psi_{1}^{-1} \circ \psi_{2} \circ \chi \in \mathcal{K}$.
Our estimates hold for arbitrary control $u:\|u\|_{U_{c}} \leq \rho_{u} ;$ thus, combining \eqref{eq:ISS first part} and \eqref{eq:ISS second part}, we obtain the claim of the theorem. 
To prove the ISS of $\Sigma$ from existence of ISS-Lyapunov function, one can argue as above but with $\rho_{x}=\rho_{u}=\infty$.
\end{proof}

}


\bibliographystyle{IEEEtran}
\bibliography{References}

\begin{thebibliography}{10}
\providecommand{\url}[1]{#1}
\csname url@samestyle\endcsname
\providecommand{\newblock}{\relax}
\providecommand{\bibinfo}[2]{#2}
\providecommand{\BIBentrySTDinterwordspacing}{\spaceskip=0pt\relax}
\providecommand{\BIBentryALTinterwordstretchfactor}{4}
\providecommand{\BIBentryALTinterwordspacing}{\spaceskip=\fontdimen2\font plus
\BIBentryALTinterwordstretchfactor\fontdimen3\font minus
  \fontdimen4\font\relax}
\providecommand{\BIBforeignlanguage}[2]{{%
\expandafter\ifx\csname l@#1\endcsname\relax
\typeout{** WARNING: IEEEtran.bst: No hyphenation pattern has been}%
\typeout{** loaded for the language `#1'. Using the pattern for}%
\typeout{** the default language instead.}%
\else
\language=\csname l@#1\endcsname
\fi
#2}}
\providecommand{\BIBdecl}{\relax}
\BIBdecl

\bibitem{teodorovic2008swarm}
D.~Teodorovi{\'c}, ``Swarm intelligence systems for transportation engineering:
  Principles and applications,'' \emph{Transportation Research Part C: Emerging
  Technologies}, vol.~16, no.~6, pp. 651--667, 2008.

\bibitem{brambilla2013swarm}
M.~Brambilla, E.~Ferrante, M.~Birattari, and M.~Dorigo, ``Swarm robotics: a
  review from the swarm engineering perspective,'' \emph{Swarm Intelligence},
  vol.~7, no.~1, pp. 1--41, 2013.

\bibitem{crespi2008top}
V.~Crespi, A.~Galstyan, and K.~Lerman, ``Top-down vs bottom-up methodologies in
  multi-agent system design,'' \emph{Autonomous Robots}, vol.~24, no.~3, pp.
  303--313, 2008.

\bibitem{nouyan2008path}
S.~Nouyan, A.~Campo, and M.~Dorigo, ``Path formation in a robot swarm,''
  \emph{Swarm Intelligence}, vol.~2, no.~1, pp. 1--23, 2008.

\bibitem{hettiarachchi2009distributed}
S.~Hettiarachchi and W.~M. Spears, ``Distributed adaptive swarm for obstacle
  avoidance,'' \emph{International Journal of Intelligent Computing and
  Cybernetics}, vol.~2, no.~4, pp. 644--671, 2009.

\bibitem{cao2012overview}
Y.~Cao, W.~Yu, W.~Ren, and G.~Chen, ``An overview of recent progress in the
  study of distributed multi-agent coordination,'' \emph{IEEE Transactions on
  Industrial informatics}, vol.~9, no.~1, pp. 427--438, 2012.

\bibitem{accikmecse2012markov}
B.~A{\c{c}}ikme{\c{s}}e and D.~S. Bayard, ``A markov chain approach to
  probabilistic swarm guidance,'' in \emph{2012 American Control Conference
  (ACC)}.\hskip 1em plus 0.5em minus 0.4em\relax IEEE, 2012, pp. 6300--6307.

\bibitem{bandyopadhyay2017probabilistic}
S.~Bandyopadhyay, S.-J. Chung, and F.~Y. Hadaegh, ``Probabilistic and
  distributed control of a large-scale swarm of autonomous agents,'' \emph{IEEE
  Transactions on Robotics}, vol.~33, no.~5, pp. 1103--1123, 2017.

\bibitem{marden2009cooperative}
J.~R. Marden, G.~Arslan, and J.~S. Shamma, ``Cooperative control and potential
  games,'' \emph{IEEE Transactions on Systems, Man, and Cybernetics, Part B
  (Cybernetics)}, vol.~39, no.~6, pp. 1393--1407, 2009.

\bibitem{ferrari2006analysis}
G.~Ferrari-Trecate, A.~Buffa, and M.~Gati, ``Analysis of coordination in
  multi-agent systems through partial difference equations,'' \emph{IEEE
  Transactions on Automatic Control}, vol.~51, no.~6, pp. 1058--1063, 2006.

\bibitem{meurer2011finite}
T.~Meurer and M.~Krstic, ``Finite-time multi-agent deployment: A nonlinear pde
  motion planning approach,'' \emph{Automatica}, vol.~47, no.~11, pp.
  2534--2542, 2011.

\bibitem{qi2014multi}
J.~Qi, R.~Vazquez, and M.~Krstic, ``Multi-agent deployment in 3-d via pde
  control,'' \emph{IEEE Transactions on Automatic Control}, vol.~60, no.~4, pp.
  891--906, 2014.

\bibitem{pilloni2015consensus}
A.~Pilloni, A.~Pisano, Y.~Orlov, and E.~Usai, ``Consensus-based control for a
  network of diffusion pdes with boundary local interaction,'' \emph{IEEE
  Transactions on Automatic Control}, vol.~61, no.~9, pp. 2708--2713, 2015.

\bibitem{freudenthaler2020pde}
G.~Freudenthaler and T.~Meurer, ``Pde-based multi-agent formation control using
  flatness and backstepping: Analysis, design and robot experiments,''
  \emph{Automatica}, vol. 115, p. 108897, 2020.

\bibitem{milutinovi2006modeling}
D.~Milutinovi and P.~Lima, ``Modeling and optimal centralized control of a
  large-size robotic population,'' \emph{IEEE Transactions on Robotics},
  vol.~22, no.~6, pp. 1280--1285, 2006.

\bibitem{lasry2007mean}
J.-M. Lasry and P.-L. Lions, ``Mean field games,'' \emph{Japanese journal of
  mathematics}, vol.~2, no.~1, pp. 229--260, 2007.

\bibitem{hamann2008framework}
H.~Hamann and H.~W{\"o}rn, ``A framework of space--time continuous models for
  algorithm design in swarm robotics,'' \emph{Swarm Intelligence}, vol.~2, no.
  2-4, pp. 209--239, 2008.

\bibitem{foderaro2014distributed}
G.~Foderaro, S.~Ferrari, and T.~A. Wettergren, ``Distributed optimal control
  for multi-agent trajectory optimization,'' \emph{Automatica}, vol.~50, no.~1,
  pp. 149--154, 2014.

\bibitem{elamvazhuthi2015optimal}
K.~Elamvazhuthi and S.~Berman, ``Optimal control of stochastic coverage
  strategies for robotic swarms,'' in \emph{2015 IEEE International Conference
  on Robotics and Automation (ICRA)}.\hskip 1em plus 0.5em minus 0.4em\relax
  IEEE, 2015, pp. 1822--1829.

\bibitem{eren2017velocity}
U.~Eren and B.~A{\c{c}}{\i}kme{\c{s}}e, ``Velocity field generation for density
  control of swarms using heat equation and smoothing kernels,''
  \emph{IFAC-PapersOnLine}, vol.~50, no.~1, pp. 9405--9411, 2017.

\bibitem{krishnan2018distributed}
V.~Krishnan and S.~Mart{\'\i}nez, ``Distributed control for spatial
  self-organization of multi-agent swarms,'' \emph{SIAM Journal on Control and
  Optimization}, vol.~56, no.~5, pp. 3642--3667, 2018.

\bibitem{elamvazhuthi2018bilinear}
K.~Elamvazhuthi, H.~Kuiper, M.~Kawski, and S.~Berman, ``Bilinear
  controllability of a class of advection--diffusion--reaction systems,''
  \emph{IEEE Transactions on Automatic Control}, vol.~64, no.~6, pp.
  2282--2297, 2018.

\bibitem{lieberman1996second}
G.~M. Lieberman, \emph{Second order parabolic differential equations}.\hskip
  1em plus 0.5em minus 0.4em\relax World scientific, 1996.

\bibitem{sontag1995characterizations}
E.~D. Sontag and Y.~Wang, ``On characterizations of the input-to-state
  stability property,'' \emph{Systems \& Control Letters}, vol.~24, no.~5, pp.
  351--359, 1995.

\bibitem{dashkovskiy2013input}
S.~Dashkovskiy and A.~Mironchenko, ``Input-to-state stability of
  infinite-dimensional control systems,'' \emph{Mathematics of Control,
  Signals, and Systems}, vol.~25, no.~1, pp. 1--35, 2013.

\bibitem{evans1998partial}
L.~C. Evans, \emph{{Partial differential equations}}, ser. Graduate Studies in
  Mathematics.\hskip 1em plus 0.5em minus 0.4em\relax Providence, RI: American
  Mathematical Society, 1998.

\bibitem{silverman1986density}
B.~W. Silverman, \emph{Density Estimation for Statistics and Data
  Analysis}.\hskip 1em plus 0.5em minus 0.4em\relax CRC Press, 1986, vol.~26.

\bibitem{curtain1995introduction}
R.~F. Curtain and H.~Zwart, \emph{An Introduction to Infinite-Dimensional
  Linear Systems Theory}.\hskip 1em plus 0.5em minus 0.4em\relax Springer
  Science \& Business Media, 1995, vol.~21.

\end{thebibliography}

\end{document}